\newtheorem{theorem}{Theorem}
\newcommand*{\e}{\mathop{}\!\mathrm{e}}
\newcommand*{\jj}{\mathop{}\!\mathrm{j}}
\newcommand{\figwidth}{7.8}
\begin{document}
\title{Towed Movable Antenna (ToMA) Array \\for Ultra Secure Airborne Communications}
\author{Lipeng Zhu,~\IEEEmembership{Member,~IEEE,}
		Haobin Mao,~\IEEEmembership{Graduate Student Member,~IEEE,}
		Wenyan Ma,~\IEEEmembership{Graduate Student Member,~IEEE,}
		Zhenyu Xiao,~\IEEEmembership{Senior Member,~IEEE,}
		Jun Zhang,
		and Rui Zhang,~\IEEEmembership{Fellow,~IEEE}
	\vspace{-0.5 cm}
	\thanks{L. Zhu and W. Ma are with the Department of Electrical and Computer Engineering, National University of Singapore, Singapore 117583 (e-mail: zhulp@nus.edu.sg, wenyan@u.nus.edu).}
	\thanks{H. Mao and Z. Xiao are with the School of Electronic and Information Engineering, Beihang University, Beijing, China 100191 (e-mail: maohaobin@buaa.deu.cn, xiaozy@buaa.edu.cn).}
	\thanks{J. Zhang is with the State Key Laboratory of CNS/ATM \& MIIT Key Laboratory of Complex-field Intelligent Sensing, Beijing Institute of Technology, Beijing 100081, China (e-mail: zhjun@bit.edu.cn). He is also with the School of Electronic and Information Engineering, Beihang University, Beijing 100191, China.}
	\thanks{R. Zhang is with School of Science and Engineering, Shenzhen Research Institute of Big Data, The Chinese University of Hong Kong, Shenzhen, Guangdong 518172, China (e-mail: rzhang@cuhk.edu.cn). He is also with the Department of Electrical and Computer Engineering, National University of Singapore, Singapore 117583 (e-mail: elezhang@nus.edu.sg).}
}

\maketitle


\begin{abstract}	
	This paper proposes a novel towed movable antenna (ToMA) array architecture to enhance the physical layer security of airborne communication systems. Unlike conventional onboard arrays with fixed-position antennas (FPAs), the ToMA array employs multiple subarrays mounted on flexible cables and towed by distributed drones, enabling agile deployment in three-dimensional (3D) space surrounding the central aircraft. This design significantly enlarges the effective array aperture and allows dynamic geometry reconfiguration, offering superior spatial resolution and beamforming flexibility. We consider a secure transmission scenario where an airborne transmitter communicates with multiple legitimate users in the presence of potential eavesdroppers. To ensure security, zero-forcing beamforming is employed to nullify signal leakage toward eavesdroppers. Based on the statistical distributions of locations of users and eavesdroppers, the antenna position vector (APV) of the ToMA array is optimized to maximize the users' ergodic achievable rate. Analytical results for the case of a single user and a single eavesdropper reveal the optimal APV structure that minimizes their channel correlation. For the general multiuser scenario, we develop a low-complexity alternating optimization algorithm by leveraging Riemannian manifold optimization. Simulation results confirm that the proposed ToMA array achieves significant performance gains over conventional onboard FPA arrays, especially in scenarios where eavesdroppers are closely located to users under line-of-sight (LoS)-dominant channels.
\end{abstract}
\begin{IEEEkeywords}
	Movable antenna (MA), towed movable antenna (ToMA) array, airborne communications, physical layer security (PLS), antenna position optimization.
\end{IEEEkeywords}

%
\IEEEpeerreviewmaketitle

\section{Introduction}
\subsection{Background}
\IEEEPARstart{A}{irborne} communication has gained increasing attention in recent years due to its potential to provide agile and wide-area wireless connectivity \cite{cao2018airborne,zeng2019access,Baltaci2021aerial}. Leveraging aerial platforms such as unmanned aerial vehicles (UAVs) and high-altitude platforms (HAPs), airborne networks have demonstrated significant advantages in a variety of civilian applications, including emergency response, environmental monitoring, and temporary broadband coverage in underserved areas. Beyond civilian scenarios, airborne communication also plays a vital role in military and defense systems \cite{He2017UAVsecure,Wang2019UAVsecure}, where secure and resilient communication links are of paramount importance. In such mission-critical contexts, any leakage of transmitted information could lead to severe consequences. The open and broadcast nature of wireless channels, combined with the elevated and thus easily exposed positions of aerial platforms, makes them particularly vulnerable to hostile eavesdropping and jamming.

To address the vulnerability of wireless transmissions, physical layer security (PLS) has emerged as a promising paradigm that complements traditional cryptographic methods \cite{liu2017secure,chen2017secure}. By exploiting the spatial, temporal, and spectral characteristics of wireless channels, PLS techniques aim to enhance the confidentiality of wireless communications at the signal level. Representative approaches include artificial noise (AN) generation to confuse eavesdroppers, cooperative relaying or jamming, and opportunistic beamforming based on channel state information (CSI). Among them, transmit beamforming based on antenna arrays is particularly attractive in airborne systems due to the high probability of line-of-sight (LoS) links \cite{chen2017secure,xiao2022mmWaveUAV}. By concentrating signal energy toward legitimate users while spatially suppressing leakage to potential eavesdroppers, beamforming can significantly improve secrecy performance without additional power or bandwidth overhead \cite{xiao2022mmWaveUAV,zhang2025secureUAV,dong2021UAVsecure}.

However, the effectiveness of array beamforming for secure communication is fundamentally limited by the spatial correlation between the channels of legitimate users and hostile eavesdroppers \cite{zhang2025secureUAV,Mukherjee2011secure,Ng2014secure}. For example, when the channels of user and eavesdropper are nearly orthogonal, transmit beamforming can efficiently enhance the signal power received at users while minimizing information leakage to eavesdroppers. In contrast, when their channels are highly correlated, such as in scenarios where they are located in close proximity or aligned in similar directions, beamforming inevitably leads to a trade-off between signal enhancement and leakage suppression. This intrinsic bottleneck cannot be resolved by simply optimizing beamforming weights in conventional systems employing fixed-position antenna (FPA) arrays. To address this challenge, movable antenna (MA) technology has recently emerged as a promising solution \cite{zhu2022MAmodel,zhu2023MAMag,ma2022MAmimo}. By proactively adjusting the antenna positions, MA systems can reduce channel correlation between users and eavesdroppers, thereby enabling more effective secure transmission when combined with beamforming. Recent works have demonstrated the significant secrecy performance gains of MA-assisted systems compared to traditional FPA architectures \cite{hu2024secure,cheng2024secure,tang2024secure,Ding2024MAsecure,mei2024MAsecure,liu2025MAcovert,feng2024MAsecure,mao2025MAcovert}.

\subsection{Related Works}
The MA technology has a long and evolving development history \cite{zhu2025tutorial}. Over the past few decades, various antennas with mechanical movement or rotational capabilities have been proposed to enhance radiation efficiency or improve received signal quality \cite{balanis2007modern,Basbug2017design}. To the best of our knowledge, the earliest rigorous study on MA-aided wireless communication systems can be traced back to 2009 \cite{zhao2009singleMA}, where the authors characterized the spatial diversity of a single receive MA moving along a one-dimensional (1D) trajectory, based on a spatial-correlation channel model. More recently, the development of field-response channel models and antenna movement optimization frameworks has significantly advanced this research area \cite{zhu2022MAmodel,zhu2023MAMag,ma2022MAmimo}. This line of work is sometimes also referred to as fluid antenna in terms of flexible antenna positioning \cite{Xu2024FASmultiple,hu2024power,wu2024fluidMag}. A wide range of studies have highlighted the performance benefits of MA systems over their FPA counterparts, including received signal power improvement \cite{zhao2009singleMA,zhu2022MAmodel,mei2024movable}, effective interference suppression \cite{zhu2023MAMag,wang2024MAinterference,ning2024movable}, enhanced beamforming flexibility \cite{zhu2023MAarray,ma2024multi,wang2024flexible}, and capacity gains in multiple-input multiple-output (MIMO) and/or multiuser communication systems \cite{ma2022MAmimo,chen2023joint,zhu2023MAmultiuser,xiao2023multiuser,wu2024globallyMA,Feng2024MAweighted,Xu2024FASmultiple,zhang2024MAhybrid}. In parallel, the problem of acquiring accurate CSI for MA systems has been extensively investigated  \cite{ma2023MAestimation,xiao2023channel,zhang2024TensorCE,Jang2025MAchannel}, enabling the reconstruction of spatially continuous channel mapping between transmit and receive regions.

To further exploit the spatial degrees of freedom (DoFs), the six-dimensional MA (6DMA) system has been proposed \cite{shao20246DMA,shao2024discrete,shao20246DMANet,shao2025tutorial,pi20246DMAcoordi,ren20246DMAUAV}, allowing full control of both three-dimensional (3D) positions and 3D orientations of antennas. For lower-cost implementations, rotatable antennas that support orientation adjustments only have been explored as a simplified form of 6DMA \cite{zheng2025rotatable,zheng2025rotatableMag}. More recently, the pinching antenna concept has been introduced to enable flexible antenna positioning along pre-deployed 1D waveguides \cite{ding2024pinching,liu2025pinching,ouyang2025pinching}, while the extremely large-scale MA (XL-MA) architecture supports antenna movement on two-dimensional (2D) surfaces \cite{fu2025extremelyMA}. Both pinching antennas and XL-MAs leverage large-scale antenna position optimization to proactively establish LoS channels, reduce large-scale path loss, and suppress multiuser interference.

However, in airborne platforms such as UAVs and HAPs, the physical dimensions and payload capacities are inherently constrained \cite{zeng2019access,xiao2022mmWaveUAV,zhu2020UAVrelay}. These practical limitations not only restrict the number of deployable antennas but also confine the maximum array aperture for both conventional FPA arrays and existing MA arrays. Although MAs can flexibly adjust their positions within a spatial region, their movement is still limited by the structural envelope of the hosting platform. As a result, when legitimate users and eavesdroppers are located in similar directions or physically close to each other, the spatial resolution and beamforming flexibility of onboard antenna arrays become insufficient to distinguish between them. The resultant high channel correlation makes it difficult to effectively suppress information leakage, thereby severely degrading secrecy performance \cite{zhu2025tutorial}. These limitations highlight the need for a more flexible and scalable antenna architecture that can overcome the physical constraints of airborne platforms.

\begin{figure}[t]
	\begin{center}
		\includegraphics[width=\figwidth cm]{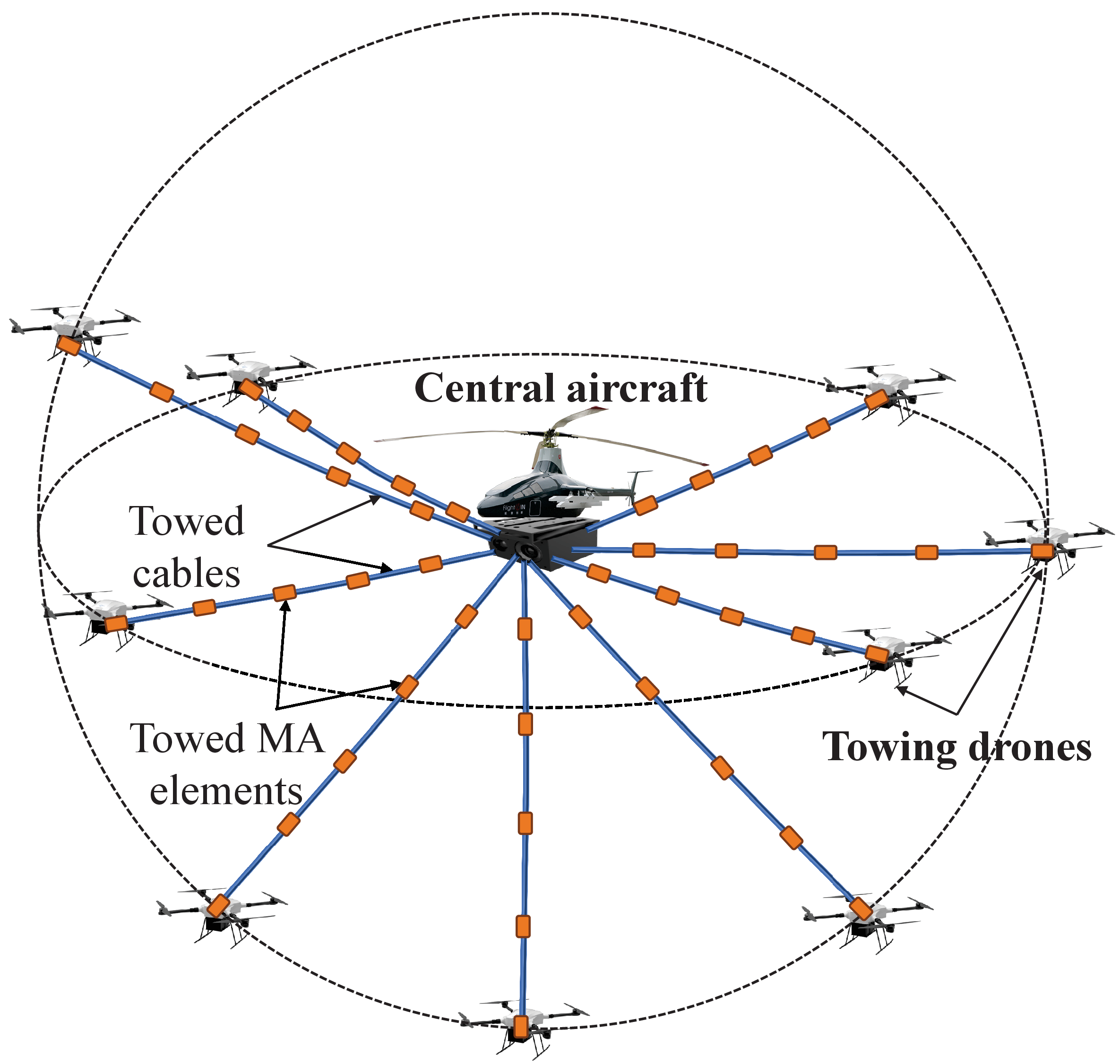}
		\caption{The architecture of the proposed ToMA array.}
		\label{fig:architecture}
	\end{center}
\end{figure}
\subsection{Contributions}

To address the limitations of existing MA architectures on airborne platforms, we propose in this paper a novel towed MA (ToMA) array for secure airborne communications. As illustrated in Fig. \ref{fig:architecture}, the central aircraft is equipped with multiple antenna subarrays mounted on flexible cables. Towed by distributed drones, these antenna subarrays can be flexibly deployed throughout the 3D space surrounding the central aircraft. This design offers two key advantages. First, the effective array aperture of the ToMA array can far exceed the physical dimensions of the airborne platform, allowing for ultra-large aperture array configurations that significantly improve spatial resolution. Second, the 3D positions of the towed subarrays can be dynamically reconfigured, enabling the array geometry to adapt in real time to the relative locations of users and eavesdroppers. These capabilities provide unprecedented flexibility in spatial beamforming and interference/leakage suppression, and thus lead to significant enhancements in PLS performance, especially in challenging environments where legitimate users and hostile eavesdroppers are closely spaced. Furthermore, these towed cables can be retracted when communication is not required. The drones and antenna subarrays can then be stored inside the central aircraft, enabling a compact and practical deployment.\footnote{The proposed ToMA array is also applicable to other size-constrained mobile platforms, such as satellites, terrestrial vehicles, naval vessels, submarines, etc.}

In this paper, we apply the proposed ToMA array to airborne secure communication systems, with the main contributions summarized as follows:
\begin{itemize}	
	\item We propose a novel ToMA array architecture for secure airborne communications. Specifically, we consider downlink transmission from the ToMA array to multiple legitimate users in the presence of multiple potential eavesdroppers. To ensure secure and covert communication, zero-forcing (ZF) beamforming is employed to completely eliminate signal leakage to eavesdroppers. Based on the statistical distributions of locations of users and eavesdroppers, the antenna position vector (APV) of the ToMA array is optimized to maximize the users' ergodic achievable rate under practical deployment constraints.
	\item To gain analytical insights, we investigate a special case involving only a single user and a single eavesdropper. We derive the minimum array response correlation that maximizes the user's achievable rate and reveal the optimal APV structure when the user and eavesdropper are closely located. The result highlights how the array geometry can be tailored to maximize the effective aperture in the critical angular region between the user and eavesdropper.
	\item For the general case with multiple users and eavesdroppers, we develop a low-complexity algorithm based on alternating optimization (AO) and Riemannian manifold optimization. The ergodic achievable rate is first approximated via Monte Carlo simulations. Then, the position of each towed subarray is alternately updated while fixing the others. In particular, each subarray’s position lies on a spherical manifold, enabling efficient optimization via Riemannian gradient methods.
	\item Extensive simulations validate that the proposed ToMA array significantly outperforms conventional onboard FPA arrays in secure communication performance. It is shown that increasing the cable length enlarges the effective aperture and improves rate performance, even approaching the theoretical upper bound. Furthermore, dynamic reconfiguration of antenna positions allows the ToMA array to flexibly adapt to various 3D user/eavesdropper distributions. The performance advantage is especially prominent in LoS-dominant scenarios when users and eavesdroppers are closely spaced.
\end{itemize}

The remainder of this paper is organized as follows. Section II introduces the ToMA array architecture and the corresponding system model. In Section III, analytical results are derived for the special case of a single user and a single eavesdropper. Section IV presents the proposed antenna position optimization algorithm. Simulation results are provided in Section V to evaluate the performance of the proposed scheme, and Section VI concludes the paper.

\textit{Notation}: $a$, $\mathbf{a}$, $\mathbf{A}$, and $\mathcal{A}$ denote a scalar, a vector, a matrix, and a set, respectively. $(\cdot)^{\rm{T}}$ and $(\cdot)^{\rm{H}}$ denote the transpose and conjugate transpose, respectively. $[\mathbf{a}]_{n}$ denotes the $n$-th entry of vector $\mathbf{a}$. $[\mathbf{A}]_{i,j}$ and $[\mathbf{A}]_{:,j}$ denote the entry in row $i$ and column $j$ and the $j$-th column vector of matrix $\mathbf{A}$, respectively. $\mathbb{Z}^{M \times N}$, $\mathbb{R}^{M \times N}$, and $\mathbb{C}^{M \times N}$ represent the sets of integer, real, and complex matrices of dimension $M \times N$, respectively. $\|\mathbf{a}\|$ denotes the 2-norm of vector $\mathbf{a}$, and $\|\mathbf{A}\|_{\mathrm{F}}$ represents the Frobenius norm of matrix $\mathbf{A}$. $\mathbf{I}_{M}$ denotes the $M$-dimensional identity matrix. $\mathbb{E}\{\cdot\}$ is the expectation of a random variable.

\begin{figure}[t]
	\begin{center}
		\includegraphics[width=8.8 cm]{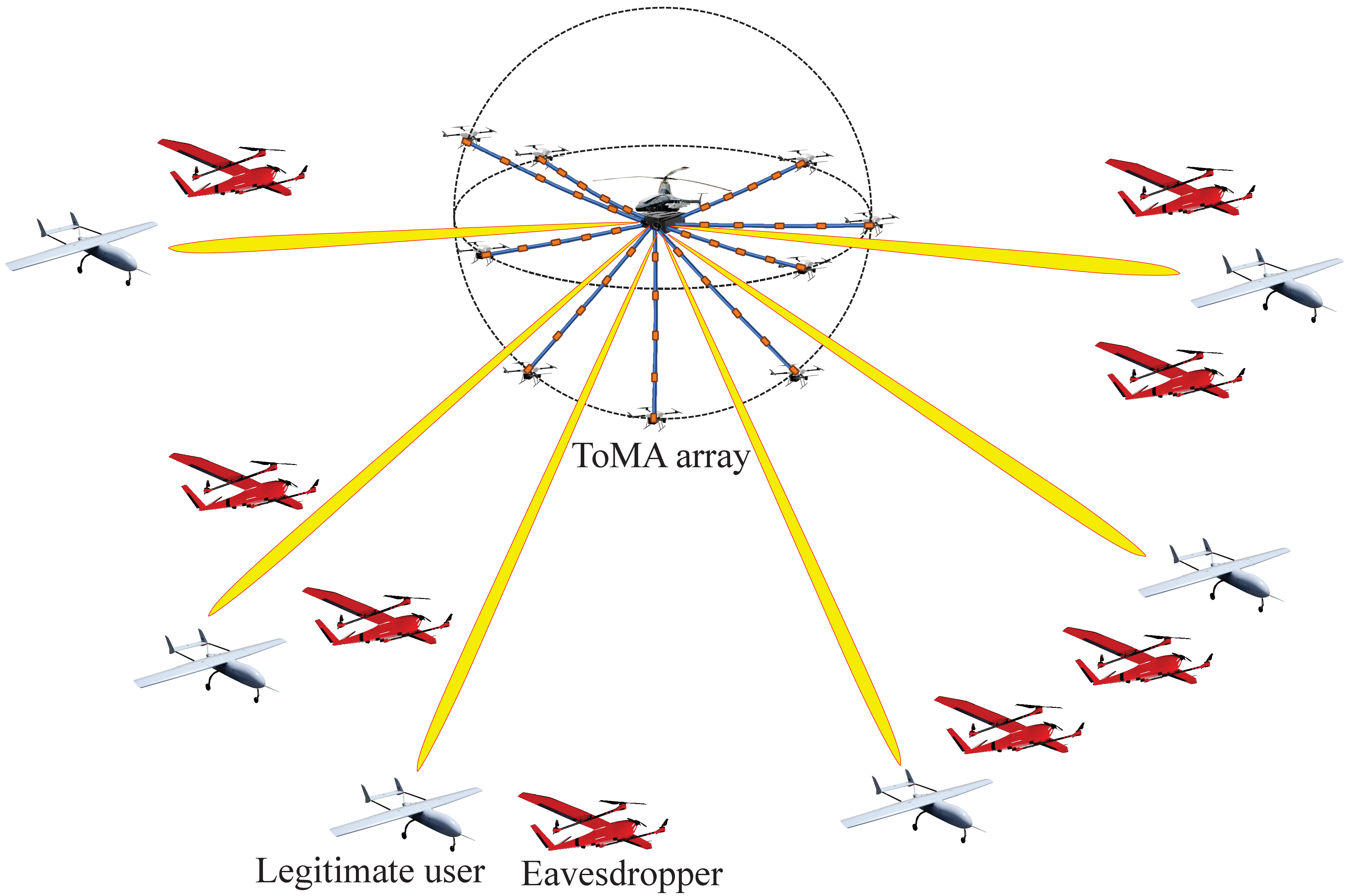}
		\caption{The considered ToMA array enabled airborne secure communication system.}
		\label{fig:system}
	\end{center}
\end{figure}

\section{System Model}
As shown in Fig. \ref{fig:system}, we consider an airborne secure communication scenario. The central aircraft is assumed to be hovering at a given position and transmits secrecy messages to legitimate users, while several hostile eavesdroppers aim to intercept these information. The numbers of users and eavesdroppers are denoted as $K$ and $I$, respectively. For simplicity, we assume that the users and eavesdroppers are each equipped with a single antenna.\footnote{In the case where the users or eavesdroppers are equipped with multiple antennas, a directional beam should be steered to the central aircraft for maximizing the received signal power, which can be equivalently regarded as a single directional antenna.} 
To ensure secure transmission, the central aircraft is equipped with a ToMA array capable of dynamically reconfiguring its aperture and geometry. Specifically, $M$ flexible cables are attached to the aircraft, with $N$ antenna elements uniformly distributed along each cable. Each cable is towed by an individual drone and can be extended outward to form a uniform linear array (ULA). When communication is not required, the cables can be retracted, and both the drones and subarrays can be conveniently stored within the aircraft. This flexible architecture allows the ToMA array to adapt to diverse operational environments and transmission requirements.

We establish a 3D Cartesian coordinate system centered at the central aircraft to describe the positions of antennas, users, and eavesdroppers. Specifically, the coordinates of user $k$ and eavesdropper $i$ are denoted as $\mathbf{r}_{\mathrm{u},k}=[X_{k}, Y_{k}, Z_{k}]^{\mathrm{T}}$, $1 \leq k \leq K$, and $\mathbf{r}_{\mathrm{e},i}=[\tilde{X}_{i}, \tilde{Y}_{i}, \tilde{Z}_{i}]^{\mathrm{T}}$, $1 \leq i \leq I$, respectively. The coordinate of the $m$-th towing drone is denoted as $\mathbf{t}_{m}=[x_{m}, y_{m}, z_{m}]^{\mathrm{T}}$, $1 \leq m \leq M$. To prevent the cable from sagging, the distance between each drone to the central aircraft is maximized to be $\left\|\mathbf{t}_{m}\right\|=L$. Due to the uniform distribution of antenna elements, the coordinate of the $n$-th antenna element on the $m$-th towed cable is given by $\mathbf{t}_{m,n}=\frac{n}{N} \mathbf{t}_{m}$, $1 \leq n \leq N$. For ease of exposition, the positions of all drones are collected into the antenna position vector (APV) $\tilde{\mathbf{t}}=[\mathbf{t}_{1}^{\mathrm{T}}, \mathbf{t}_{2}^{\mathrm{T}}, \dots, \mathbf{t}_{M}^{\mathrm{T}}]^{\mathrm{T}} \in \mathbb{C}^{3M \times 1}$, which uniquely determines the array geometry.

Due to the rare scatterers and obstacles, the air-to-air channels are dominated by LoS paths. Since the aperture size of the ToMA array is extremely large, typically on the order of a few meters to several tens of meters, the users and eavesdroppers may be located within the near-field region of the ToMA array. Thus, we adopt the uniform spherical wave model to characterize the array response vector, which is given by
\begin{equation}\label{eq_array}
	\begin{aligned}
		\mathbf{a}(\tilde{\mathbf{t}}, \mathbf{r}) = \Big{[}\e^{\jj \frac{2\pi}{\lambda} \|\mathbf{t}_{m,n}-\mathbf{r}\|}\Big{]}^{\mathrm{T}}_{1 \leq m \leq M, 1 \leq n \leq N} \in \mathbb{C}^{MN \times 1}.
	\end{aligned}
\end{equation}

Then, the channel vectors for user $k$ and eavesdropper $i$ are respectively given by\footnote{The proposed ToMA architecture can be readily extended to configurations with non-uniform inter-antenna spacing along each cable and unequal lengths among multiple cables. Besides, each antenna element is assumed to exhibit a quasi-omnidirectional radiation pattern, and potential blockage from the central aircraft is ignored. These simplifying assumptions are adopted to focus on the core system design, while a more comprehensive treatment is left for future work.}
\begin{subequations}\label{eq_channel}
	\begin{align}
		&\mathbf{h}_{k}(\tilde{\mathbf{t}}) = \alpha_{k} \mathbf{a}(\tilde{\mathbf{t}}, \mathbf{r}_{\mathrm{u},k}) \in \mathbb{C}^{MN \times 1},~1 \leq k \leq K,\\
		&\mathbf{g}_{i}(\tilde{\mathbf{t}}) = \beta_{i} \mathbf{a}(\tilde{\mathbf{t}}, \mathbf{r}_{\mathrm{e},i}) \in \mathbb{C}^{MN \times 1},~1 \leq i \leq I,
	\end{align}
\end{subequations}
where $\alpha_{k}$ is the path gain of user $k$ and $\beta_{i}$ denotes the path gain of eavesdropper $i$. Note that the channel model in \eqref{eq_channel} can be readily extended to Rician fading with random non-LoS (NLoS) components. The performance comparison under different Rician factors will be presented via simulations in Section V. The channel matrices for all $K$ users and all $I$ eavesdroppers are denoted as $\mathbf{H}(\tilde{\mathbf{t}}) = [\mathbf{h}_{1}(\tilde{\mathbf{t}}), \mathbf{h}_{2}(\tilde{\mathbf{t}}), \dots, \mathbf{h}_{K}(\tilde{\mathbf{t}})] \in \mathbb{C}^{MN \times K}$ and $\mathbf{G}(\tilde{\mathbf{t}}) = [\mathbf{g}_{1}(\tilde{\mathbf{t}}), \mathbf{g}_{2}(\tilde{\mathbf{t}}), \dots, \mathbf{g}_{I}(\tilde{\mathbf{t}})] \in \mathbb{C}^{MN \times I}$.

Denoting the transmit signal as $\mathbf{s} \in \mathbb{C}^{K \times 1}$ and the beamforming matrix as $\mathbf{W} \in \mathbb{C}^{MN \times K}$, the signals received at the users and the eavesdroppers can be respectively expressed as
\begin{subequations}\label{eq_signal}
	\begin{align}
		&\mathbf{y}_{\mathrm{u}} = \mathbf{H}(\tilde{\mathbf{t}})^{\mathrm{H}} \mathbf{W} \mathbf{s} + \mathbf{n}_{\mathrm{u}},\\
		&\mathbf{y}_{\mathrm{e}} = \mathbf{G}(\tilde{\mathbf{t}})^{\mathrm{H}} \mathbf{W} \mathbf{s} + \mathbf{n}_{\mathrm{e}},
	\end{align}
\end{subequations}
where $\mathbf{n}_{\mathrm{u}}$ and $\mathbf{n}_{\mathrm{e}}$ are the complex Gaussian noise vectors with average power $\sigma^{2}$.
To guarantee ultra secure and covert transmission, the transmit beamforming matrix is designed based on the ZF criterion (assuming the instantaneous channels/locations of the users and eavesdroppers are known at the central aircraft for given $\tilde{\mathbf{t}}$) to completely eliminate the signal leakage to eavesdroppers, which is given by\footnote{In practice, the locations of users and eavesdroppers can be acquired using airborne radar. Furthermore, accurate radar sensing and localization enabled by the proposed ToMA array represent a promising direction for future research.}
\begin{equation}\label{eq_ZF}
	\begin{aligned}
		&\mathbf{W}(\tilde{\mathbf{t}}) = \sqrt{P} \frac{\bar{\mathbf{W}}(\tilde{\mathbf{t}})}{\left\|\bar{\mathbf{W}}(\tilde{\mathbf{t}})\right\|_{\mathrm{F}}},\\
		&\bar{\mathbf{W}}(\tilde{\mathbf{t}}) = \left[[\mathbf{H}(\tilde{\mathbf{t}}), \mathbf{G}(\tilde{\mathbf{t}})] ([\mathbf{H}(\tilde{\mathbf{t}}), \mathbf{G}(\tilde{\mathbf{t}})]^{\mathrm{H}} [\mathbf{H}(\tilde{\mathbf{t}}), \mathbf{G}(\tilde{\mathbf{t}})])^{-1}\right]_{:, 1:K},
	\end{aligned}
\end{equation}
where $P$ is the total transmit power. It is worth noting that the ZF beamforming in \eqref{eq_ZF} can not only realize ultra low received signal power at all eavesdroppers, but also ensure the same received signal power at all users. Thus, the corresponding achievable rate of each user in terms of bits-per-second-per-Hertz (bps/Hz) is given by
\begin{equation}\label{eq_rate}
	r(\tilde{\mathbf{t}}) = \log_{2}\Big{(}1 + \frac{P}{\left\|\bar{\mathbf{W}}(\tilde{\mathbf{t}})\right\|_{\mathrm{F}}^{2}\sigma^{2}} \Big{)}, 
\end{equation}
which guarantees the user fairness.

In practice, due to the random movement of the users and eavesdroppers, their channels may vary over time. In this paper, we aim to maximize the ergodic achievable rate of the users by optimizing the APV of the ToMA array based on any given statistical distribution of the locations of users and eavesdroppers, while the ZF beamforming is then applied to cater to time-varying locations of users/eavesdroppers with the APV fixed at its optimized solution. The optimization problem is thus formulated as
\begin{subequations}\label{eq_problem}
	\begin{align}
		\mathop{\max} \limits_{\tilde{\mathbf{t}}} ~ &\mathbb{E}\{r(\tilde{\mathbf{t}})\} \label{eq_problem_a}\\
		\mathrm{s.t.}~~  &\left\|\mathbf{t}_{m}\right\|=L,~1 \leq m \leq M,\label{eq_problem_b}\\
		&\left\|\mathbf{t}_{m} - \mathbf{t}_{\hat{m}}\right\| \geq D,~1 \leq m \neq \hat{m} \leq M,\label{eq_problem_c}
	\end{align}
\end{subequations}
where the expectation in \eqref{eq_problem_a} is conducted over random locations of the users and eavesdroppers as well as their associated wireless channels. Constraint \eqref{eq_problem_b} ensures that each towed ULA cable remains in its fully extended state, and constraint \eqref{eq_problem_c} ensures the minimum distance between any two towing drones to avoid collision. Problem \eqref{eq_problem} is challenging to solve optimally because the objective function does not have a closed form and the constraints are non-convex. To attain insights on antenna position optimization, we first conduct performance analysis for some simple cases in Section III. Then, a numerical algorithm for solving problem \eqref{eq_problem} is presented in the general case in Section IV.

\section{Performance Analysis}
In this section, we consider the simple case of \emph{a single user (i.e., $K=1$) and a single eavesdropper (i.e., $I=1$)} to demonstrate the fundamental performance advantage of the proposed ToMA array. As such, the ZF beamforming matrix in \eqref{eq_ZF} is recast into
\begin{equation}\label{eq_ZF2}
	\begin{aligned}
		&\mathbf{w}(\tilde{\mathbf{t}}) = \sqrt{P} \frac{\bar{\mathbf{w}}(\tilde{\mathbf{t}})}{\left\|\bar{\mathbf{w}}(\tilde{\mathbf{t}})\right\|},\\
		&\bar{\mathbf{w}}(\tilde{\mathbf{t}}) = \frac{\left(\mathbf{I}_{MN} - 	\frac{\mathbf{g}(\tilde{\mathbf{t}})\mathbf{g}(\tilde{\mathbf{t}})^{\mathrm{H}}}{\|\mathbf{g}(\tilde{\mathbf{t}})\|^{2}}\right) \mathbf{h}(\tilde{\mathbf{t}})}
		{\mathbf{h}(\tilde{\mathbf{t}})^{\mathrm{H}}\left(\mathbf{I}_{MN} - 	\frac{\mathbf{g}(\tilde{\mathbf{t}})\mathbf{g}(\tilde{\mathbf{t}})^{\mathrm{H}}}{\|\mathbf{g}(\tilde{\mathbf{t}})\|^{2}}\right) \mathbf{h}(\tilde{\mathbf{t}})},
	\end{aligned}
\end{equation}
where $\mathbf{h}(\tilde{\mathbf{t}})$ and $\mathbf{g}(\tilde{\mathbf{t}})$ respectively denote the channel vectors for the user and eavesdropper in \eqref{eq_channel}, with the subscript index omitted. The achievable rate in \eqref{eq_rate} is thus simplified as $r(\tilde{\mathbf{t}}) = \log_{2}(1 + \frac{P}{\left\|\bar{\mathbf{w}}(\tilde{\mathbf{t}})\right\|^{2}\sigma^{2}} )$. We then focus on maximizing the achievable rate for any given deterministic positions of the user and the eavesdropper, i.e., 
\begin{subequations}\label{eq_problem2}
	\begin{align}
		\mathop{\max} \limits_{\tilde{\mathbf{t}}} ~ &r(\tilde{\mathbf{t}}) = \log_{2}\Big{(}1 + \frac{P}{\left\|\bar{\mathbf{w}}(\tilde{\mathbf{t}})\right\|^{2}\sigma^{2}}\Big{)} \label{eq_problem2_a}\\
		\mathrm{s.t.}~~  &\eqref{eq_problem_b},~\eqref{eq_problem_c}.
	\end{align}
\end{subequations}
It can be observed that maximizing $r(\tilde{\mathbf{t}})$ is equivalently to maximizing $1/\|\bar{\mathbf{w}}(\tilde{\mathbf{t}})\|^{2}$. Substituting \eqref{eq_ZF2} and \eqref{eq_channel} into $\|\bar{\mathbf{w}}(\tilde{\mathbf{t}})\|^{2}$, we have
\begin{equation}\label{eq_ZF_power}
	\begin{aligned}
		\frac{1}{\|\bar{\mathbf{w}}(\tilde{\mathbf{t}})\|^{2}} &=  \mathbf{h}(\tilde{\mathbf{t}})^{\mathrm{H}} \left(\mathbf{I}_{MN} - \frac{\mathbf{g}(\tilde{\mathbf{t}})\mathbf{g}(\tilde{\mathbf{t}})^{\mathrm{H}}}{\|\mathbf{g}(\tilde{\mathbf{t}})\|^{2}}\right) \mathbf{h}(\tilde{\mathbf{t}}) \\
		&= \|\mathbf{h}(\tilde{\mathbf{t}})\|^{2} - \frac{|\mathbf{h}(\tilde{\mathbf{t}})^{\mathrm{H}} \mathbf{g}(\tilde{\mathbf{t}})|^{2}}{\|\mathbf{g}(\tilde{\mathbf{t}})\|^{2}}\\
		&=|\alpha|^{2}MN-\frac{|\alpha|^{2}|\mathbf{a}(\tilde{\mathbf{t}}, \mathbf{r}_{\mathrm{u}})^{\mathrm{H}} \mathbf{a}(\tilde{\mathbf{t}}, \mathbf{r}_{\mathrm{e}})|^{2}}{MN}.
	\end{aligned}
\end{equation}
Thus, minimizing $\|\bar{\mathbf{w}}(\tilde{\mathbf{t}})\|^{2}$ is equivalent to minimizing the correlation of the array response vectors, termed as \emph{array response correlation}, for the user and eavesdropper, i.e.,
\begin{subequations}\label{eq_problem3}
	\begin{align}
		\mathop{\min} \limits_{\tilde{\mathbf{t}}} ~ &|\mathbf{a}(\tilde{\mathbf{t}}, \mathbf{r}_{\mathrm{u}})^{\mathrm{H}} \mathbf{a}(\tilde{\mathbf{t}}, \mathbf{r}_{\mathrm{e}})| \label{eq_problem3_a}\\
		\mathrm{s.t.}~~  &\eqref{eq_problem_b},~\eqref{eq_problem_c}.
	\end{align}
\end{subequations}

Note that in the considered airborne communication scenario, the distance between the central aircraft and the user/eavesdropper is much larger than the ULA cable length $L$. Therefore, it is reasonably to assume that the user and eavesdropper are both located at beyond the Fresnel distance of the ToMA array. The distance between the transmit and receive antennas in \eqref{eq_array} can be well approximated by its second-order Taylor expansion as \cite{liu2023near}
\begin{equation}\label{eq_distance_far}{\small
		\begin{aligned}
			&\|\mathbf{t}_{m,n}-\mathbf{r}\| = \sqrt{\|\mathbf{t}_{m,n}\|^{2} - 2\mathbf{t}_{m,n}^{\mathrm{T}}\mathbf{r} + \|\mathbf{r}\|^{2}}\\
			\approx&\|\mathbf{r}\| - \hat{\mathbf{r}}^{\mathrm{T}}\mathbf{t}_{m,n}  + \frac{\|\mathbf{t}_{m,n}\|^{2} - (\hat{\mathbf{r}}^{\mathrm{T}}\mathbf{t}_{m,n})^{2}}{2\|\mathbf{r}\|},
	\end{aligned}}
\end{equation}
where $\hat{\mathbf{r}} \triangleq \frac{\mathbf{r}}{\|\mathbf{r}\|}$ denotes the normalized direction vector.

Next, we analyze the minimum array response correlation in \eqref{eq_problem3_a} under two different conditions, including 1) Far-field condition: The user and eavesdropper are located in the far-field region of the ToMA array; 2) Same-direction condition: The user and eavesdropper are located in the same direction to the center of the ToMA array. The corresponding maximum achievable rate can be obtained by directly substituting the optimal objective value in \eqref{eq_problem3_a} into \eqref{eq_ZF_power} and \eqref{eq_problem2}.

\subsection{Far-Field Condition}
Under the far-field condition with $\frac{\|\mathbf{t}_{m,n}\|^{2}}{\|\mathbf{r}\|} \ll \lambda$, $\forall m, n$, the third term in \eqref{eq_distance_far} is much smaller than the wavelength and thus can be neglected in the array response vector in \eqref{eq_array}.
Then, the array response correlation in \eqref{eq_problem3_a} can be expressed as
\begin{equation}\label{eq_corr_far}
	\begin{aligned}
		f_{\mathrm{ff}}(\tilde{\mathbf{t}}) \triangleq |\mathbf{a}(\tilde{\mathbf{t}}, \mathbf{r}_{\mathrm{u}})^{\mathrm{H}} \mathbf{a}(\tilde{\mathbf{t}}, \mathbf{r}_{\mathrm{e}})| 
		=\left|\sum \limits_{m=1}^{M} \sum \limits_{n=1}^{N} \e^{\jj \frac{2\pi}{\lambda} (\hat{\mathbf{r}}_{\mathrm{u}} - \hat{\mathbf{r}}_{\mathrm{e}})^{\mathrm{T}}\mathbf{t}_{m,n} } \right|.
	\end{aligned}
\end{equation}

\begin{theorem}\label{Theo_far}
	Under the far-field condition and a single towed ULA, i.e., $M=1$, the minimum array response correlation in \eqref{eq_corr_far} is given by 
	\begin{equation}\label{eq_corr_far_min}
		\mathop{\min} \limits_{\tilde{\mathbf{t}}} ~ f_{\mathrm{ff}}(\tilde{\mathbf{t}})=\left\{
		\begin{aligned}
			&0,~~~~~~~ \mathrm{if} ~\|\hat{\mathbf{r}}_{\mathrm{u}} - \hat{\mathbf{r}}_{\mathrm{e}}\|L \geq \lambda,\\
			&F_{\mathrm{ff}}(L),~ \mathrm{if}~ \|\hat{\mathbf{r}}_{\mathrm{u}} - \hat{\mathbf{r}}_{\mathrm{e}}\|L < \lambda,\\
			&N,~~~~~~ \mathrm{if} ~\|\hat{\mathbf{r}}_{\mathrm{u}} - \hat{\mathbf{r}}_{\mathrm{e}}\| = 0,
		\end{aligned}\right.
		\end{equation}
		with $F_{\mathrm{ff}}(L) \triangleq \left|\frac{\sin(\frac{\pi}{\lambda}\|\hat{\mathbf{r}}_{\mathrm{u}} - \hat{\mathbf{r}}_{\mathrm{e}}\|L)}{\sin(\frac{\pi}{\lambda}\frac{1}{N}\|\hat{\mathbf{r}}_{\mathrm{u}} - \hat{\mathbf{r}}_{\mathrm{e}}\|L)}\right|$ achieved when $\mathbf{t}_{1}=\pm L \frac{\hat{\mathbf{r}}_{\mathrm{u}} - \hat{\mathbf{r}}_{\mathrm{e}}}{\|\hat{\mathbf{r}}_{\mathrm{u}} - \hat{\mathbf{r}}_{\mathrm{e}}\|}$.
\end{theorem}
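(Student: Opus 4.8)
The plan is to collapse the optimization over the sphere $\|\mathbf{t}_1\| = L$ (for $M=1$ the separation constraint \eqref{eq_problem_c} is vacuous) into a one-dimensional problem in the scalar $\phi \triangleq (\hat{\mathbf{r}}_{\mathrm{u}} - \hat{\mathbf{r}}_{\mathrm{e}})^{\mathrm{T}}\mathbf{t}_1$, and then to study a Dirichlet-type kernel in $\phi$. Throughout I take $N \geq 2$ (for $N=1$ the statement is degenerate since $f_{\mathrm{ff}}\equiv 1$). First I would substitute $\mathbf{t}_{1,n} = \frac{n}{N}\mathbf{t}_1$ into \eqref{eq_corr_far}, so that $f_{\mathrm{ff}}(\tilde{\mathbf{t}}) = \bigl|\sum_{n=1}^{N}\e^{\jj\frac{2\pi}{\lambda N}\phi\, n}\bigr|$ becomes a finite geometric series in the unit-modulus ratio $\e^{\jj\frac{2\pi}{\lambda N}\phi}$. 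Summing it and applying $|\e^{\jj x}-1| = 2|\sin(x/2)|$ yields the closed form $f_{\mathrm{ff}}(\tilde{\mathbf{t}}) = g(\phi)$, where $g(\phi) \triangleq \bigl|\frac{\sin(\frac{\pi}{\lambda}\phi)}{\sin(\frac{\pi}{\lambda N}\phi)}\bigr|$ for $\phi \notin \lambda N\mathbb{Z}$, extended by continuity to $g = N$ at those isolated points. In the special situation $\hat{\mathbf{r}}_{\mathrm{u}} = \hat{\mathbf{r}}_{\mathrm{e}}$ one has $\phi \equiv 0$ and hence $f_{\mathrm{ff}} \equiv N$, which gives the third branch of \eqref{eq_corr_far_min}.

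Next I would pin down the set of values $\phi$ can take. Since $\|\mathbf{t}_1\| = L$, the Cauchy--Schwarz inequality gives $|\phi| \leq \|\hat{\mathbf{r}}_{\mathrm{u}} - \hat{\mathbf{r}}_{\mathrm{e}}\|L$; moreover every value of the interval $[-\|\hat{\mathbf{r}}_{\mathrm{u}} - \hat{\mathbf{r}}_{\mathrm{e}}\|L,\, \|\hat{\mathbf{r}}_{\mathrm{u}} - \hat{\mathbf{r}}_{\mathrm{e}}\|L]$ is attained by choosing the polar angle of $\mathbf{t}_1$ relative to $\hat{\mathbf{r}}_{\mathrm{u}} - \hat{\mathbf{r}}_{\mathrm{e}}$, and the two endpoints $\pm\|\hat{\mathbf{r}}_{\mathrm{u}} - \hat{\mathbf{r}}_{\mathrm{e}}\|L$ are reached exactly when $\mathbf{t}_1 = \pm L\frac{\hat{\mathbf{r}}_{\mathrm{u}} - \hat{\mathbf{r}}_{\mathrm{e}}}{\|\hat{\mathbf{r}}_{\mathrm{u}} - \hat{\mathbf{r}}_{\mathrm{e}}\|}$ (the equality case of Cauchy--Schwarz). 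Thus \eqref{eq_problem3} is reduced to minimizing the even function $g$ over this symmetric interval.

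The \emph{crux} is to show that $g$ is strictly decreasing on $[0,\lambda]$, from $g(0^{+}) = N$ down to $g(\lambda) = 0$. Setting $u = \frac{\pi}{\lambda}\phi \in [0,\pi]$, it suffices to prove that $h(u) \triangleq \frac{\sin u}{\sin(u/N)}$ is strictly decreasing on $(0,\pi)$, equivalently $\frac{h'(u)}{h(u)} = \cot u - \frac{1}{N}\cot(u/N) < 0$. For this I would establish the sharper fact that, for each fixed $u \in (0,\pi)$, the map $b \mapsto b\cot(bu)$ is strictly decreasing on $(0,1]$: its derivative equals $\frac{\frac{1}{2}\sin(2bu) - bu}{\sin^{2}(bu)}$, which is negative because $\sin x < x$ for all $x > 0$. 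Evaluating this map at $b = 1$ and $b = 1/N < 1$ then yields $\cot u < \frac{1}{N}\cot(u/N)$. The endpoint values $h(0^{+}) = N$ and $h(\pi) = 0$ (using $\sin(\pi/N) > 0$ for $N \geq 2$) are immediate, and $g \geq 0$ throughout.

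Finally I would assemble the three cases. If $\|\hat{\mathbf{r}}_{\mathrm{u}} - \hat{\mathbf{r}}_{\mathrm{e}}\|L \geq \lambda$, the attainable interval for $\phi$ contains $\lambda$, where $g$ vanishes, so $\min_{\tilde{\mathbf{t}}} f_{\mathrm{ff}} = 0$. If $0 < \|\hat{\mathbf{r}}_{\mathrm{u}} - \hat{\mathbf{r}}_{\mathrm{e}}\|L < \lambda$, then by evenness and strict monotonicity of $g$ on $[0,\lambda]$ its minimum over $[-\|\hat{\mathbf{r}}_{\mathrm{u}} - \hat{\mathbf{r}}_{\mathrm{e}}\|L,\, \|\hat{\mathbf{r}}_{\mathrm{u}} - \hat{\mathbf{r}}_{\mathrm{e}}\|L]$ is attained at $|\phi| = \|\hat{\mathbf{r}}_{\mathrm{u}} - \hat{\mathbf{r}}_{\mathrm{e}}\|L$, i.e., at the value $F_{\mathrm{ff}}(L)$ and with minimizer $\mathbf{t}_1 = \pm L\frac{\hat{\mathbf{r}}_{\mathrm{u}} - \hat{\mathbf{r}}_{\mathrm{e}}}{\|\hat{\mathbf{r}}_{\mathrm{u}} - \hat{\mathbf{r}}_{\mathrm{e}}\|}$. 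If $\|\hat{\mathbf{r}}_{\mathrm{u}} - \hat{\mathbf{r}}_{\mathrm{e}}\| = 0$, the only feasible $\phi$ is $0$ and $f_{\mathrm{ff}} \equiv N$. The \emph{main obstacle} is the monotonicity lemma of the third step --- equivalently, the standard but not entirely obvious fact that the magnitude of a ULA array factor is monotone across its main lobe; the rest is routine algebra and case checking.
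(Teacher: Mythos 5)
Your proposal is correct and follows essentially the same route as the paper's Appendix A: reduce to the scalar $\phi=(\hat{\mathbf{r}}_{\mathrm{u}}-\hat{\mathbf{r}}_{\mathrm{e}})^{\mathrm{T}}\mathbf{t}_{1}$, sum the geometric series into the Dirichlet kernel $\bigl|\sin(\tfrac{\pi}{\lambda}\phi)/\sin(\tfrac{\pi}{\lambda N}\phi)\bigr|$, note that $\phi$ sweeps $[-\|\hat{\mathbf{r}}_{\mathrm{u}}-\hat{\mathbf{r}}_{\mathrm{e}}\|L,\ \|\hat{\mathbf{r}}_{\mathrm{u}}-\hat{\mathbf{r}}_{\mathrm{e}}\|L]$, and split into the three cases. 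The only substantive difference is that you rigorously prove the main-lobe monotonicity of the kernel (via the decrease of $b\mapsto b\cot(bu)$), a fact the paper asserts without proof, so your write-up is, if anything, more complete.
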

\begin{proof}
	See Appendix \ref{Appx_far}.
\end{proof}

According to the proof of Theorem~\ref{Theo_far}, if $\|\hat{\mathbf{r}}_{\mathrm{u}} - \hat{\mathbf{r}}_{\mathrm{e}}\| L < \lambda$, the minimum array response correlation $F_{\mathrm{ff}}(L)$ is achieved if and only if $\mathbf{t}_{1}$ is parallel to $(\hat{\mathbf{r}}_{\mathrm{u}} - \hat{\mathbf{r}}_{\mathrm{e}})$. In other words, the boresight of the ULA cable should be aligned with the angular bisector between the user and eavesdropper directions, thereby maximizing the effective array aperture for distinguishing between them. In this regime, the minimum correlation $F_{\mathrm{ff}}(L)$ is a decreasing function of the cable length $L$, as increasing $L$ leads to improved angular resolution. 

When the ULA cable is sufficiently long, i.e., $\|\hat{\mathbf{r}}_{\mathrm{u}} - \hat{\mathbf{r}}_{\mathrm{e}}\| L \geq \lambda$, the minimum array response correlation becomes zero. This can be achieved by gradually rotating the ULA cable away from the direction of $(\hat{\mathbf{r}}_{\mathrm{u}} - \hat{\mathbf{r}}_{\mathrm{e}})$ until the orthogonality condition $\sin\left( \frac{\pi}{\lambda} (\hat{\mathbf{r}}_{\mathrm{u}} - \hat{\mathbf{r}}_{\mathrm{e}})^{\mathrm{T}} \mathbf{t}_{1} \right) = 0$ is satisfied. In this case, the channel vectors of the user and eavesdropper become orthogonal, and the ZF beamforming reduces to maximum ratio transmission (MRT), i.e., $\bar{\mathbf{w}}(\tilde{\mathbf{t}}) = \mathbf{h}(\tilde{\mathbf{t}})/\|\mathbf{h}(\tilde{\mathbf{t}})\|^{2}$. This yields an upper bound on the achievable rate given by $\log_{2}(1 + \frac{|\alpha|^{2} N P}{\sigma^{2}})$. 

When the user and eavesdropper are aligned in the same direction relative to the ToMA array, i.e., $\|\hat{\mathbf{r}}_{\mathrm{u}} - \hat{\mathbf{r}}_{\mathrm{e}}\|^{2} = 0$, their channels become highly correlated, resulting in $\frac{1}{\|\bar{\mathbf{w}}(\tilde{\mathbf{t}})\|^{2}} = 0$. In this case, the achievable rate of the user drops to zero, indicating that secure transmission is not feasible under the far-field condition. However, as will be demonstrated later, extending the length of the towed ULA cable can effectively enlarge the near-field region of the ToMA array, distinguishing the user with the eavesdropper along the same direction.

\begin{figure}[t]
	\begin{center}
		\includegraphics[width=\figwidth cm]{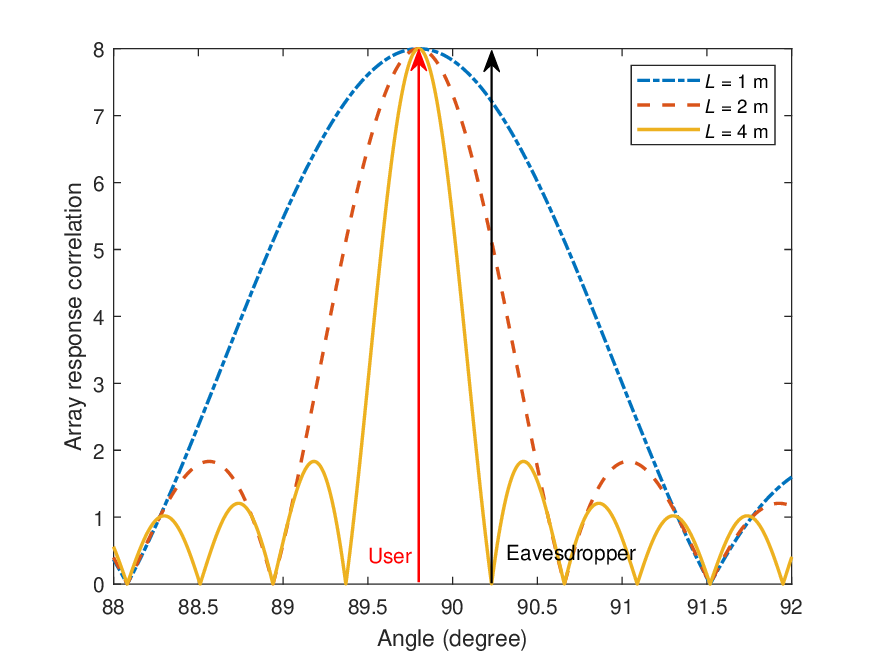}
		\caption{Array response correction between the user and eavesdropper under the far-field condition, with $M=1$, $N=8$, and $\lambda=0.03$ m.}
		\label{Fig_Beam_angular}
	\end{center}
\end{figure}

To illustrate the benefit of increasing the length of the towed ULA cable, Fig.~\ref{Fig_Beam_angular} presents the array response correlation between the user and the eavesdropper under far-field conditions. Specifically, we set \( \mathbf{t}_{1} = [L, 0, 0]^{\mathrm{T}} \), i.e., aligned along the \( x \)-axis. Both the user and the eavesdropper are assumed to lie on the \( x\text{-}O\text{-}z \) plane, with the user's angle relative to the \( x \)-axis fixed at \( 89.8^\circ \). As the eavesdropper's angle varies, the array response correlation exhibits different trends depending on the cable length. Notably, when the eavesdropper is located along angle \( 90.23^\circ \), the correlation for \( L = 4 \)~meters (m) approaches zero, whereas the correlation for \( L = 1 \)~m yields the highest value among all schemes. This demonstrates that, when the user and eavesdropper are closely spaced in the angular domain, increasing the ULA cable length significantly reduces the array response correlation, thereby enhancing the secure communication performance.

\begin{theorem}\label{Coro_far}
	Under the far-field condition and two towed ULAs, i.e., $M=2$, the minimum array response correlation in \eqref{eq_corr_far} is given by  
	\begin{equation}\label{eq_corr_far_min2}
		\mathop{\min} \limits_{\tilde{\mathbf{t}}} ~ f_{\mathrm{ff}}(\tilde{\mathbf{t}})=\left\{
			\begin{aligned}
				&0,~~~~~~~ \mathrm{if} ~2\frac{N+1}{N}\|\hat{\mathbf{r}}_{\mathrm{u}} - \hat{\mathbf{r}}_{\mathrm{e}}\|L \geq \lambda,\\
				&\tilde{F}_{\mathrm{ff}}(L),~ \mathrm{if}~ 2\frac{N+1}{N}\|\hat{\mathbf{r}}_{\mathrm{u}} - \hat{\mathbf{r}}_{\mathrm{e}}\|L < \lambda,\\
				&N,~~~~~~ \mathrm{if} ~\|\hat{\mathbf{r}}_{\mathrm{u}} - \hat{\mathbf{r}}_{\mathrm{e}}\| = 0,
		\end{aligned}\right.
	\end{equation}
	with $\tilde{F}_{\mathrm{ff}}(L) \triangleq 2\left|\cos(\frac{\pi}{\lambda}\frac{N+1}{N}\|\hat{\mathbf{r}}_{\mathrm{u}} - \hat{\mathbf{r}}_{\mathrm{e}}\|L)\right|F_{\mathrm{ff}}(L)$ achieved when $\mathbf{t}_{2}=-\mathbf{t}_{1}=\pm L \frac{\hat{\mathbf{r}}_{\mathrm{u}} - \hat{\mathbf{r}}_{\mathrm{e}}}{\|\hat{\mathbf{r}}_{\mathrm{u}} - \hat{\mathbf{r}}_{\mathrm{e}}\|}$.
\end{theorem}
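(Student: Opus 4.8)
The plan is to reduce the $M=2$ case to the $M=1$ result of Theorem~\ref{Theo_far} by exploiting the separable structure of the far-field array response. First I would introduce the shorthand $\mathbf{d} \triangleq \hat{\mathbf{r}}_{\mathrm{u}} - \hat{\mathbf{r}}_{\mathrm{e}}$ and observe that since $\mathbf{t}_{m,n} = \frac{n}{N}\mathbf{t}_m$, the double sum in \eqref{eq_corr_far} factorizes as
\begin{equation}
	f_{\mathrm{ff}}(\tilde{\mathbf{t}}) = \left| \sum_{m=1}^{2} \sum_{n=1}^{N} \e^{\jj \frac{2\pi}{\lambda} \frac{n}{N} \mathbf{d}^{\mathrm{T}} \mathbf{t}_m} \right| = \left| \sum_{m=1}^{2} \e^{\jj \frac{2\pi}{\lambda}\frac{N+1}{2N} \mathbf{d}^{\mathrm{T}}\mathbf{t}_m} \frac{\sin\!\big(\tfrac{\pi}{\lambda}\mathbf{d}^{\mathrm{T}}\mathbf{t}_m\big)}{\sin\!\big(\tfrac{\pi}{\lambda}\tfrac{1}{N}\mathbf{d}^{\mathrm{T}}\mathbf{t}_m\big)} \right|,
\end{equation}
using the closed form of a geometric series (the Dirichlet-kernel identity already implicit in the proof of Theorem~\ref{Theo_far}). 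Writing $u_m \triangleq \mathbf{d}^{\mathrm{T}}\mathbf{t}_m$, each summand has magnitude $|F(u_m)|$ with $F(u) \triangleq \sin(\tfrac{\pi}{\lambda}u)/\sin(\tfrac{\pi}{\lambda N}u)$ and phase $\tfrac{\pi}{\lambda}\tfrac{N+1}{N}u_m$ (up to the sign of $F$), and by Cauchy--Schwarz $|u_m| \le \|\mathbf{d}\| L$ with equality iff $\mathbf{t}_m \parallel \mathbf{d}$.

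Next I would argue the optimization decouples into choosing the magnitudes and the relative phase. The modulus $|F(u_m)|$ is maximized over $|u_m|\le\|\mathbf{d}\|L$ exactly as in Theorem~\ref{Theo_far}: if $\|\mathbf{d}\|L<\lambda$ it is monotone so the optimum pushes $|u_m|$ to its extreme $\|\mathbf{d}\|L$, forcing $\mathbf{t}_m = \pm L\hat{\mathbf{d}}$; the collision constraint \eqref{eq_problem_c} is then satisfied only by the antipodal choice $\mathbf{t}_2 = -\mathbf{t}_1$ (for $L$ large enough relative to $D$, which holds in the ultra-large-aperture regime under consideration — I would note this mild assumption explicitly). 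With $u_1 = -u_2 = \pm\|\mathbf{d}\|L$, the two phasors become complex conjugates times the common real modulus $F_{\mathrm{ff}}(L)$, so their sum has magnitude $2|\cos(\tfrac{\pi}{\lambda}\tfrac{N+1}{N}\|\mathbf{d}\|L)|\,F_{\mathrm{ff}}(L) = \tilde F_{\mathrm{ff}}(L)$; one must also check that no interior configuration beats this, which follows because any gain in the relative-phase (cosine) factor from shrinking $|u_m|$ is more than offset by the loss in $|F(u_m)|^2$ — this monotonicity comparison is where I would be most careful. For the zero case, I would show that once $2\tfrac{N+1}{N}\|\mathbf{d}\|L \ge \lambda$ the accessible range of $u_m$ is wide enough to either hit a null of $F$ (as in the $M=1$ argument) or tune $\cos(\tfrac{\pi}{\lambda}\tfrac{N+1}{N}\|\mathbf{d}\|L)=0$ with $\mathbf{t}_2=-\mathbf{t}_1$ at full extension — a short intermediate-value continuity argument, rotating the common axis of the two cables, delivers $f_{\mathrm{ff}}=0$. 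The degenerate case $\|\mathbf{d}\|=0$ gives $u_m\equiv 0$, $f_{\mathrm{ff}}\equiv 2N$ — and here I would flag that the stated value $N$ in \eqref{eq_corr_far_min2} appears to be a typo for $2N$ (mirroring the $N$ vs.\ $MN=2N$ total-element count).

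The main obstacle is the joint treatment of the two cables under the collision constraint: unlike $M=1$, the objective is not simply "maximize one modulus," because the phase alignment between the two subarray contributions matters, and a priori the optimizer might trade a slightly shorter effective projection $|u_m|$ for a more favorable interference phase $\cos(\cdot)$. Resolving this requires showing that the product $|F(u)|\cdot(\text{best achievable phase factor})$ is still maximized at the extreme $|u|=\|\mathbf{d}\|L$ with the antipodal geometry — equivalently, that on the regime $2\tfrac{N+1}{N}\|\mathbf{d}\|L<\lambda$ the map $u\mapsto F(u)^2$ grows fast enough to dominate the mild concavity of the cosine factor. I expect this to come down to a single-variable calculus inequality on $[0,\lambda N/(2(N+1))]$, analogous to (but slightly more delicate than) the monotonicity lemma underpinning Theorem~\ref{Theo_far}, and I would either verify it directly by sign analysis of a derivative or, more cleanly, note that any feasible $\tilde{\mathbf{t}}$ with $|u_1|,|u_2|\le \|\mathbf{d}\|L$ satisfies $f_{\mathrm{ff}}(\tilde{\mathbf{t}}) \le |F(u_1)| + |F(u_2)| \le 2F_{\mathrm{ff}}(L)$ by the triangle inequality and modulus monotonicity, and then separately show the antipodal full-extension point attains $\tilde F_{\mathrm{ff}}(L)$, which equals this bound precisely when $\cos(\tfrac{\pi}{\lambda}\tfrac{N+1}{N}\|\mathbf{d}\|L)=\pm 1$ — so for the genuinely "short cable" subregime the triangle-inequality bound $2F_{\mathrm{ff}}(L)$ is loose and the sharper conjugate-phasor computation is needed. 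I would therefore structure the proof to first nail down $\mathbf{t}_2=-\mathbf{t}_1=\pm L\hat{\mathbf{d}}$ as the unique optimizer of the two moduli, and only then evaluate $f_{\mathrm{ff}}$ exactly at that point.
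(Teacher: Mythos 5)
Your overall route is the same as the paper's (Appendix~\ref{Appx_far_coro}): factor $f_{\mathrm{ff}}$ into the modulus of a sum of two phased Dirichlet kernels $\e^{\jj\frac{\pi}{\lambda}\frac{N+1}{N}u_m}F(u_m)$ with $u_m=(\hat{\mathbf{r}}_{\mathrm{u}}-\hat{\mathbf{r}}_{\mathrm{e}})^{\mathrm{T}}\mathbf{t}_m$, treat the three regimes separately, and in the short-cable regime reduce everything to showing that the antipodal, fully projected configuration is optimal. Your identification of the central difficulty --- that the optimizer might trade a shorter projection $|u_m|$ for a better relative phase --- is exactly right, and your remark that the $\|\hat{\mathbf{r}}_{\mathrm{u}}-\hat{\mathbf{r}}_{\mathrm{e}}\|=0$ branch should read $2N$ is consistent with the paper's own appendix, which states $2N$ even though the theorem displays $N$.

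However, the one concrete device you offer to resolve that difficulty would fail. The chain $f_{\mathrm{ff}}\le|F(u_1)|+|F(u_2)|\le 2F_{\mathrm{ff}}(L)$ is wrong on both counts: the triangle inequality only \emph{upper}-bounds $f_{\mathrm{ff}}$, which can never certify a minimum value, and since $F_{\mathrm{ff}}(L)$ is the \emph{minimum} of $|F|$ over the accessible range, the second inequality is reversed ($|F(u_1)|+|F(u_2)|\ge 2F_{\mathrm{ff}}(L)$). Similarly, invoking the collision constraint \eqref{eq_problem_c} to force $\mathbf{t}_2=-\mathbf{t}_1$ is not the operative mechanism (the paper never uses it): antipodality wins because of phase cancellation --- co-located cables would give $2F_{\mathrm{ff}}(L)>\tilde{F}_{\mathrm{ff}}(L)$ --- not because of feasibility. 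What is left of your plan is the promised single-variable monotonicity inequality, which is precisely the content of the paper's argument: it first shows the two phasors must carry opposite-signed phases, then a two-way monotonicity argument in $(u_1,u_2)$ pins down $u_2=-u_1$, and finally monotonicity of the reduced one-variable function yields $\tilde{F}_{\mathrm{ff}}(L)$. A clean way to repair your write-up without any derivative analysis: in the regime $2\frac{N+1}{N}\|\hat{\mathbf{r}}_{\mathrm{u}}-\hat{\mathbf{r}}_{\mathrm{e}}\|L<\lambda$ both summands have positive moduli and arguments in $(-\frac{\pi}{2},\frac{\pi}{2})$, so taking real parts gives $f_{\mathrm{ff}}\ge\sum_{m=1}^{2}F(u_m)\cos\big(\tfrac{\pi}{\lambda}\tfrac{N+1}{N}u_m\big)\ge 2F_{\mathrm{ff}}(L)\cos\big(\tfrac{\pi}{\lambda}\tfrac{N+1}{N}\|\hat{\mathbf{r}}_{\mathrm{u}}-\hat{\mathbf{r}}_{\mathrm{e}}\|L\big)=\tilde{F}_{\mathrm{ff}}(L)$, because each positive factor is separately minimized at $|u_m|=\|\hat{\mathbf{r}}_{\mathrm{u}}-\hat{\mathbf{r}}_{\mathrm{e}}\|L$ (using the Dirichlet-kernel monotonicity already established for Theorem~\ref{Theo_far}); equality at the antipodal point then closes the case.
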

\begin{proof}
	See Appendix \ref{Appx_far_coro}.
\end{proof}

According to the proof of Theorem~\ref{Coro_far}, an optimal solution always exists under the condition $\mathbf{t}_{1} = -\mathbf{t}_{2}$. This configuration, where two ULAs are collinear and symmetric, maximizes the array aperture along one spatial dimension. Similar observations to those in the single towed ULA case can then be drawn. Specifically, if $2\frac{N+1}{N}\|\hat{\mathbf{r}}_{\mathrm{u}} - \hat{\mathbf{r}}_{\mathrm{e}}\|L < \lambda$, the minimum correlation $\tilde{F}_{\mathrm{ff}}(L)$ is a decreasing function of the cable length $L$. In this case, the optimal APV should ensure that both $\mathbf{t}_{1}$ and $\mathbf{t}_{2}$ are parallel to $(\hat{\mathbf{r}}_{\mathrm{u}} - \hat{\mathbf{r}}_{\mathrm{e}})$, thereby maximizing the array aperture between the user and eavesdropper directions. Furthermore, when the ULA cables are sufficiently long, i.e., $2\frac{N+1}{N}\|\hat{\mathbf{r}}_{\mathrm{u}} - \hat{\mathbf{r}}_{\mathrm{e}}\|L \geq \lambda$, the array response vectors of the user and eavesdropper can become orthogonal via antenna position optimization. This leads to an upper bound on the achievable rate given by $\log_{2}(1 + \frac{2|\alpha|^{2}NP}{\sigma^{2}})$.

\subsection{Same-Direction Condition}
Under the same-direction condition with $\mathbf{r}_{\mathrm{u}} \propto \mathbf{r}_{\mathrm{e}}$, we denote the normalized direction vector as $\frac{\mathbf{r}_{\mathrm{u}}}{\|\mathbf{r}_{\mathrm{u}}\|}=\frac{\mathbf{r}_{\mathrm{e}}}{\|\mathbf{r}_{\mathrm{e}}\|}=\hat{\mathbf{r}}$. The array response correlation in \eqref{eq_problem3_a} can thus be simplified as
\begin{equation}\label{eq_corr_near}
	\begin{aligned}
		&f_{\mathrm{sd}}(\tilde{\mathbf{t}}) \triangleq |\mathbf{a}(\tilde{\mathbf{t}}, \mathbf{r}_{\mathrm{u}})^{\mathrm{H}} \mathbf{a}(\tilde{\mathbf{t}}, \mathbf{r}_{\mathrm{e}})| \\
		=&\left|\sum \limits_{m=1}^{M} \sum \limits_{n=1}^{N} \e^{\jj \frac{\pi}{\lambda} (\frac{1}{\|\mathbf{r}_{\mathrm{e}}\|}-\frac{1}{\|\mathbf{r}_{\mathrm{u}}\|})(\|\mathbf{t}_{m,n}\|^{2} - (\hat{\mathbf{r}}^{\mathrm{T}}\mathbf{t}_{m,n})^{2})} \right|.
	\end{aligned}
\end{equation}

For any given direction $\hat{\mathbf{r}}$, we define a virtual array whose element positions are given by $\mathbf{t}'_{m,n} \triangleq \mathbf{t}_{m,n} - \hat{\mathbf{r}}\hat{\mathbf{r}}^{\mathrm{T}}\mathbf{t}_{m,n}$, $1 \leq m \leq M$ and $1 \leq n \leq N$, representing the projection of $\mathbf{t}_{m,n}$ onto the plane orthogonal to $\hat{\mathbf{r}}$. It can be readily verified that $\|\mathbf{t}'_{m,n}\|^{2} - (\hat{\mathbf{r}}^{\mathrm{T}}\mathbf{t}'_{m,n})^{2} = \|\mathbf{t}_{m,n}\|^{2} - (\hat{\mathbf{r}}^{\mathrm{T}}\mathbf{t}_{m,n})^{2}$ always holds. It indicates that this virtual array yields the same array response correlation as in~\eqref{eq_corr_near}. In other words, different physical array configurations that share the same projected virtual array exhibit identical array correlation properties along direction $\hat{\mathbf{r}}$. Based on this finding, we provide the following theorem to demonstrate the minimum array response correlation.

\begin{theorem}\label{Theo_dir}
	Under the same-direction condition and a single towed ULA, i.e., $M=1$, the minimum array response correlation is given by
	\begin{equation}\label{eq_corr_near_min}
		\mathop{\min} \limits_{\tilde{\mathbf{t}}} ~ f_{\mathrm{sd}}(\tilde{\mathbf{t}})=\left\{
		\begin{aligned}
			&F_{\mathrm{sd}}(L),~ \mathrm{if}~ (\frac{1}{\|\mathbf{r}_{\mathrm{e}}\|}-\frac{1}{\|\mathbf{r}_{\mathrm{u}}\|})L^{2} < \lambda,\\
			&N,~~~~~~~ \mathrm{if} ~\|\hat{\mathbf{r}}_{\mathrm{u}}\| - \|\hat{\mathbf{r}}_{\mathrm{e}}\| = 0,
		\end{aligned}\right.
	\end{equation}
	with $F_{\mathrm{sd}}(L) \triangleq \left|\sum \limits_{n=1}^{N} \e^{\jj \frac{\pi}{\lambda} (\frac{1}{\|\mathbf{r}_{\mathrm{e}}\|}-\frac{1}{\|\mathbf{r}_{\mathrm{u}}\|})\frac{n^{2}}{N^{2}}L^{2}} \right|$ achieved when $\hat{\mathbf{r}}^{\mathrm{T}}\mathbf{t}_{1}=0$.
\end{theorem}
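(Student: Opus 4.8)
The plan is to reduce the $M=1$ case of \eqref{eq_corr_near} to a scalar optimization and then prove monotonicity. First I would exploit the ULA structure: for $M=1$ we have $\mathbf{t}_{1,n}=\frac{n}{N}\mathbf{t}_{1}$ with $\|\mathbf{t}_{1}\|=L$, so $\|\mathbf{t}_{1,n}\|^{2}-(\hat{\mathbf{r}}^{\mathrm{T}}\mathbf{t}_{1,n})^{2}=\frac{n^{2}}{N^{2}}\big(L^{2}-(\hat{\mathbf{r}}^{\mathrm{T}}\mathbf{t}_{1})^{2}\big)$. Introducing the scalar $v\triangleq L^{2}-(\hat{\mathbf{r}}^{\mathrm{T}}\mathbf{t}_{1})^{2}$, the Cauchy--Schwarz inequality $|\hat{\mathbf{r}}^{\mathrm{T}}\mathbf{t}_{1}|\le\|\hat{\mathbf{r}}\|\,\|\mathbf{t}_{1}\|=L$ gives $v\in[0,L^{2}]$, and every such $v$ is attainable by taking $\mathbf{t}_{1}=\sqrt{L^{2}-v}\,\hat{\mathbf{r}}+\sqrt{v}\,\hat{\mathbf{n}}$ for any unit $\hat{\mathbf{n}}\perp\hat{\mathbf{r}}$ (constraint \eqref{eq_problem_c} is vacuous for a single cable, and \eqref{eq_problem_b} holds by construction). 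Hence minimizing $f_{\mathrm{sd}}$ over $\tilde{\mathbf{t}}$ is equivalent to minimizing $g(v)\triangleq\big|\sum_{n=1}^{N}\e^{\jj c n^{2}v/N^{2}}\big|$ over $v\in[0,L^{2}]$, where $c\triangleq\frac{\pi}{\lambda}\big(\frac{1}{\|\mathbf{r}_{\mathrm{e}}\|}-\frac{1}{\|\mathbf{r}_{\mathrm{u}}\|}\big)$. Since swapping the roles of user and eavesdropper conjugates the sum without changing its modulus, I may assume without loss of generality that $c\ge 0$; if $\|\mathbf{r}_{\mathrm{u}}\|=\|\mathbf{r}_{\mathrm{e}}\|$ then $c=0$, $g\equiv N$, and the minimum is $N$, which is the third branch of \eqref{eq_corr_near_min}.

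The second step is to show that $g$ is strictly decreasing on $[0,L^{2}]$ under the hypothesis $cL^{2}<\pi$ (equivalently $(\tfrac{1}{\|\mathbf{r}_{\mathrm{e}}\|}-\tfrac{1}{\|\mathbf{r}_{\mathrm{u}}\|})L^{2}<\lambda$). Expanding the squared modulus gives $g(v)^{2}=N+2\sum_{1\le n<n'\le N}\cos\!\big(\tfrac{c(n'^{2}-n^{2})}{N^{2}}v\big)$, hence $\frac{\dif}{\dif v}g(v)^{2}=-\frac{2c}{N^{2}}\sum_{1\le n<n'\le N}(n'^{2}-n^{2})\sin\!\big(\tfrac{c(n'^{2}-n^{2})}{N^{2}}v\big)$. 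The crucial point is that every beat index obeys $1\le n'^{2}-n^{2}\le N^{2}-1<N^{2}$, so for $v\in(0,L^{2}]$ each argument $\tfrac{c(n'^{2}-n^{2})}{N^{2}}v$ lies strictly in $(0,\pi)$; thus every $\sin$ term is strictly positive and the derivative is strictly negative. Therefore the minimum of $g$ over $[0,L^{2}]$ is attained only at $v=L^{2}$, i.e.\ at $(\hat{\mathbf{r}}^{\mathrm{T}}\mathbf{t}_{1})^{2}=0$, equivalently $\hat{\mathbf{r}}^{\mathrm{T}}\mathbf{t}_{1}=0$, and its value is $g(L^{2})=\big|\sum_{n=1}^{N}\e^{\jj\frac{\pi}{\lambda}(\frac{1}{\|\mathbf{r}_{\mathrm{e}}\|}-\frac{1}{\|\mathbf{r}_{\mathrm{u}}\|})\frac{n^{2}}{N^{2}}L^{2}}\big|=F_{\mathrm{sd}}(L)$, which is the first branch.

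I expect the monotonicity step to be the main obstacle: the condition $cL^{2}<\pi$ is precisely what prevents any pairwise phase difference from escaping the half-period $(0,\pi)$ on which $\sin$ has a fixed sign, thereby ruling out sign cancellation in $\frac{\dif}{\dif v}g(v)^{2}$ and hence spurious interior minima. Everything else --- the Cauchy--Schwarz reduction to the scalar $v$, the attainability of $v=L^{2}$, the degenerate equal-distance case, and reading off $F_{\mathrm{sd}}(L)$ --- is routine. As a byproduct, strict monotonicity also gives uniqueness of the optimal condition $\hat{\mathbf{r}}^{\mathrm{T}}\mathbf{t}_{1}=0$, formalizing the intuition that the cable should be deployed orthogonal to the common user/eavesdropper direction so as to maximize the difference in wavefront curvature that separates their near-field array responses.
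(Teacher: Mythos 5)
Your proposal is correct and follows essentially the same route as the paper's Appendix C: reduce to the scalar parameter $v=L^{2}-(\hat{\mathbf{r}}^{\mathrm{T}}\mathbf{t}_{1})^{2}\in[0,L^{2}]$, handle the equal-distance degenerate case, and conclude by monotonicity of the resulting Dirichlet-type sum in $v$ under the condition $(\tfrac{1}{\|\mathbf{r}_{\mathrm{e}}\|}-\tfrac{1}{\|\mathbf{r}_{\mathrm{u}}\|})L^{2}<\lambda$. The one place you go further is the monotonicity step itself, which the paper dismisses as ``easy to verify'': your pairwise-beat expansion $g(v)^{2}=N+2\sum_{n<n'}\cos\bigl(\tfrac{c(n'^{2}-n^{2})}{N^{2}}v\bigr)$ together with the bound $n'^{2}-n^{2}\le N^{2}-1<N^{2}$ keeps every sine argument in $(0,\pi)$ and thus supplies the rigorous (indeed strict) monotonicity argument, plus the uniqueness of the optimizer, that the paper leaves implicit.
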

\begin{proof}
	See Appendix \ref{Appx_dir}.
\end{proof}

According to Theorem \ref{Theo_dir}, under the condition where the user and eavesdropper lie in the same direction with a short distance, the towed ULA should be oriented perpendicular to the wave direction, i.e., $\hat{\mathbf{r}}^{\mathrm{T}}\mathbf{t}_{1}=0$, in order to maximize the effective array aperture and thus improve distance separability. 
It is worth noting that in the near-field region, the minimum array response correlation generally cannot approach zero, as the phase of each element in \eqref{eq_corr_near} exhibits a nonlinear dependence on the antenna index \( n \). Under the condition \( (\frac{1}{\|\mathbf{r}_{\mathrm{e}}\|} - \frac{1}{\|\mathbf{r}_{\mathrm{u}}\|}) L^{2} < \lambda \), it can be readily shown that \( F_{\mathrm{sd}}(L) \) is a decreasing function of \( L \). This implies that increasing the ULA cable length can effectively enhance the secure communication performance.

To validate this observation, Fig.~\ref{Fig_Beam_boresight} plots the array response correlation between the user and the eavesdropper under the same-direction condition. Specifically, we set \( \mathbf{t}_{1} = [L, 0, 0]^{\mathrm{T}} \), aligned along the \( x \)-axis. Both the user and the eavesdropper are assumed to lie along the \( z \)-axis, with the user's distance from the origin fixed at 200 m. As the eavesdropper's distance varies, the array response correlation exhibits distinct trends for different cable lengths. In particular, when the eavesdropper's distance is larger than $50$ m, the correlation decreases with increasing cable length due to enhanced spatial separability.

\begin{figure}[t]
	\begin{center}
		\includegraphics[width=\figwidth cm]{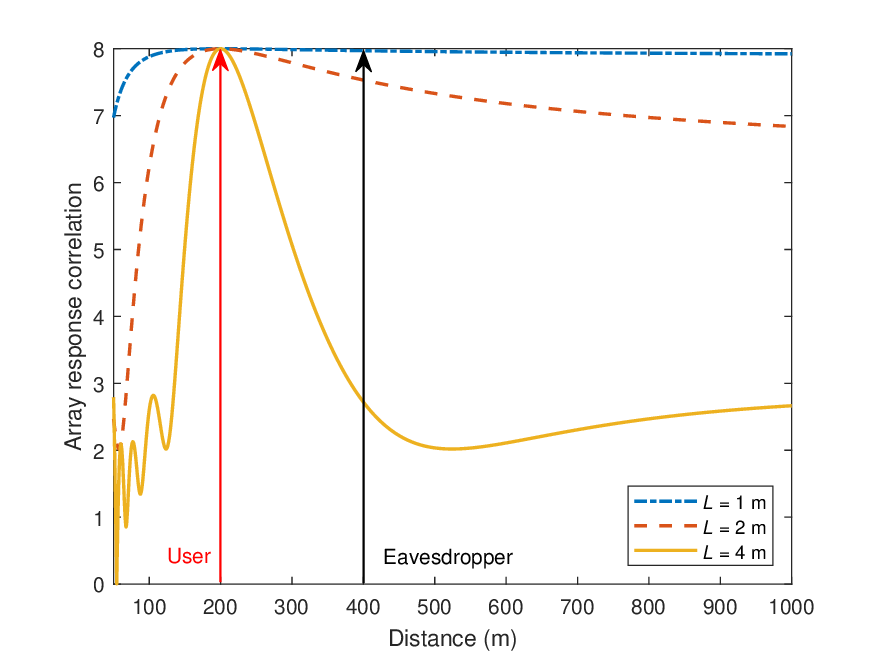}
		\caption{Array response correction between the user and eavesdropper under the same-direction condition, with $M=1$, $N=8$, and $\lambda=0.03$ m.}
		\label{Fig_Beam_boresight}
	\end{center}
\end{figure}

In the general case involving multiple towed ULAs, the presence of additional constraints on the minimum inter-cable distance makes it challenging to derive a unified optimal solution for the APV. Nevertheless, the system possesses higher DoFs for antenna position optimization, enabling the reduction of array response correlation between the user and eavesdropper. In this context, suboptimal APVs can be efficiently obtained through numerical algorithms. It is also foreseeable that increasing the length of ULA cables leads to reduced array response correlations, thereby enhancing secure communication performance.

\section{Optimization Algorithm}
In this section, we present a numerical algorithm to obtain a suboptimal solution for problem~\eqref{eq_problem}. The challenge arises from the objective function in~\eqref{eq_problem_a}, which involves the ergodic achievable rate and lacks a closed-form expression in general. To tackle this, we employ the Monte Carlo simulation method to approximate the objective function.
Specifically, given any statistical distribution of the users and eavesdroppers, we randomly generate their positions and corresponding channels over \( Q \) independent realizations. For sufficiently large $Q$, the expectation of the achievable rate in~\eqref{eq_problem_a} can be approximated by the sample average across all realizations, i.e.,
\begin{equation}\label{eq_obj}
	\mathbb{E}\{r(\tilde{\mathbf{t}})\} \approx \frac{1}{Q} \sum_{q=1}^{Q} r_{q}(\tilde{\mathbf{t}}) \triangleq \tilde{r}(\tilde{\mathbf{t}}),
\end{equation}
where \( r_{q}(\tilde{\mathbf{t}}) \) denotes the achievable rate under the \( q \)-th realization, \( 1 \leq q \leq Q \).

Next, we adopt the AO technique to iteratively optimize each ULA's position with the others being fixed. For the optimization of $\mathbf{t}_{m} \in \mathbb{R}^{3 \times 1}$, $1 \leq m \leq M$, we denote the achievable rate as its function given by $\tilde{r}_{m}(\mathbf{t}_{m})$. Then, the subproblem for optimizing $\mathbf{t}_{m}$ can be expressed as
\begin{subequations}\label{eq_problem_sub}
	\begin{align}
		\mathop{\max} \limits_{\mathbf{t}_{m}} ~ &\tilde{r}_{m}(\mathbf{t}_{m}) \label{eq_problem_sub_a}\\
		\mathrm{s.t.}~~  &\left\|\mathbf{t}_{m}\right\|=L,\label{eq_problem_sub_b}\\
		&\left\|\mathbf{t}_{m} - \mathbf{t}_{\hat{m}}\right\| \geq D,~\forall \hat{m} \neq m.\label{eq_problem_sub_c}
	\end{align}
\end{subequations} 

The primary challenge in solving problem \eqref{eq_problem_sub} lies in the intrinsic non-convexity of constraint \eqref{eq_problem_sub_b}, which renders conventional gradient-based methods in the Euclidean space inapplicable. Nonetheless, constraint \eqref{eq_problem_sub_b} defines a spherical manifold given by
\begin{equation}\label{eq_manifold}
	\mathcal{S} = \{\mathbf{t} \in \mathbb{R}^{3 \times 1} \mid \mathbf{t}^{\mathrm{T}}\mathbf{t} = L^{2} \}.
\end{equation}
This geometric structure allows the adoption of Riemannian manifold optimization techniques, ensuring that the iterates remain on $\mathcal{S}$ and thus automatically satisfy constraint \eqref{eq_problem_sub_b} throughout the optimization process.

To this end, we introduce some basic definitions in Riemannian manifold optimization \cite{yu2016alternating,boumal2023introduction}. The \emph{tangent space} of the spherical manifold $\mathcal{S}$ at point $\mathbf{t}$ is the set of all vectors in the Euclid space which are orthogonal to $\mathbf{t}$, given by
\begin{equation}\label{eq_tangent}
	T_{\mathbf{t}}\mathcal{S} = \{\mathbf{v} \in \mathbb{R}^{3 \times 1} \mid \mathbf{v}^{\mathrm{T}}\mathbf{t} = 0 \},
\end{equation}
wherein the tangent vectors specify all possible directions along which the point can move.

Among all the tangent vectors in $T_{\mathbf{t}}\mathcal{S}$, one of them is defined as the \emph{Riemanian gradient}, which represents the direction yielding the greatest increase of the objective function. The Riemannian gradient of function $\tilde{r}_{m}(\cdot)$ at point $\mathbf{t}$ is given by the orthogonal projection of the Euclidean gradient, denoted as $\nabla_\mathbf{t} \tilde{r}_{m}$, onto the tangent space $T_{\mathbf{t}}\mathcal{S}$, i.e.,
\begin{equation}\label{eq_grad}
	\mathrm{grad}_{\mathbf{t}} \tilde{r}_{m} = \nabla_\mathbf{t} \tilde{r}_{m} - \frac{\mathbf{t}\mathbf{t}^{\mathrm{T}}}{L^{2}} \nabla_\mathbf{t} \tilde{r}_{m},
\end{equation}
where the Euclidean gradient $\nabla_\mathbf{t} \tilde{r}_{m}$ can be numerically calculated according to the definition. 

Then, the gradient-based approach in traditional Euclidean space can be used in a similar way to solve the manifold optimization problem efficiently. Specifically, given the solution $\mathbf{t}^{(j)}$ in the $j$-th iteration, the search direction for maximizing the objective function in the tangent space is recursively defined as
\begin{equation}\label{eq_search_dir}
	\boldsymbol{\mu}^{(j)} = \mathrm{grad}_{\mathbf{t}^{(j)}} \tilde{r}_{m} +\kappa^{(j)} \mathrm{Transp}(\boldsymbol{\mu}^{(j-1)}),
\end{equation}
where $\kappa^{(j)}$ is the Polak-Ribi\`{e}re parameter to accelerate convergence. $\mathrm{Transp}(\cdot)$ denotes the \emph{transport} operation, which is a mapping between two vectors in different tangent spaces $T_{\mathbf{t}^{(j-1)}}\mathcal{S}$ and $T_{\mathbf{t}^{(j)}}\mathcal{S}$, given by
\begin{equation}\label{eq_transp}
	\begin{aligned}
		&\mathrm{Transp}_{\mathbf{t}^{(j-1)} \rightarrow \mathbf{t}^{(j)}}(\boldsymbol{\mu}^{(j-1)}) \triangleq T_{\mathbf{t}^{(j-1)}}\mathcal{S} \mapsto T_{\mathbf{t}^{(j)}}\mathcal{S}:\\
		&\boldsymbol{\mu}^{(j-1)} \mapsto \boldsymbol{\mu}^{(j-1)} - \frac{\mathbf{t}^{(j)}{\mathbf{t}^{(j)}}^{\mathrm{T}}}{L^{2}} \boldsymbol{\mu}^{(j-1)}.
	\end{aligned}
\end{equation}

Since the translation of the point along the tangent direction results in it moving out of the Riemannian manifold, we need to define the \emph{retraction} operation which maps a vector from the tangent space onto the manifold itself. The retraction of a tangent vector $\tau^{(j)}\boldsymbol{\mu}^{(j)}$ at point $\mathbf{t}^{(j)}$ can be expressed as
\begin{equation}\label{eq_ret}
	\begin{aligned}
		&\mathrm{Retr}_{\mathbf{t}^{(j)}}(\tau^{(j)}\boldsymbol{\mu}^{(j)}) \triangleq T_{\mathbf{t}^{(j)}}\mathcal{S} \mapsto \mathcal{S}:\\
		&\tau^{(j)}\boldsymbol{\mu}^{(j)} \mapsto L\frac{\mathbf{t}^{(j)} + \tau^{(j)}\boldsymbol{\mu}^{(j)}}{\|\mathbf{t}^{(j)} + \tau^{(j)}\boldsymbol{\mu}^{(j)}\|},
	\end{aligned}
\end{equation}
where $\tau^{(j)}$ denotes the step size obtained by backtracking line search. Specifically, the step size is initialized as a large positive value, $\tau^{(j)} = \tau_{\max}$. Then, we shrink the step size by a factor $\zeta \in (0,1)$ repeatedly, i.e., $\tau^{(j)} \leftarrow \zeta\tau^{(j)}$, until constraint \eqref{eq_problem_sub_c} and the Armijo–Goldstein condition are both satisfied, i.e.,
\begin{equation}\label{eq_Armijo}
	\begin{aligned}
		&\tilde{r}_{m}(\mathrm{Retr}_{\mathbf{t}^{(j)}}(\tau^{(j)}\boldsymbol{\mu}^{(j)})) \geq \tilde{r}_{m}(\mathbf{t}^{(j)}) + \xi \tau^{(j)} \left\|\mathrm{grad}_{\mathbf{t}^{(j)}} \tilde{r}_{m}\right\|,
	\end{aligned}
\end{equation}
where $\xi \in (0,1)$ is a predefined parameter to control the increasing speed of the objective function.

\begin{algorithm}[t]\small
	\caption{\small AO and Riemannian manifold optimization.}
	\label{Alg_Riemannian}
	\begin{algorithmic}[1]
		\REQUIRE ~$M$, $N$, $L$, $D$, $K$, $I$, $P$, $\sigma^{2}$, $\lambda$, $\tau_{\max}$, $\tau_{\min}$, $\zeta$, $\xi$, $\epsilon$, $J$, $T$.
		\STATE Initialize the APV $\tilde{\mathbf{t}}$.
		\STATE Generate $Q$ independent channel realizations.
		\STATE Obtain $\tilde{r}(\tilde{\mathbf{t}})$ according to \eqref{eq_obj}.
		\FOR   {$t=1:T$}
		\FOR   {$m=1:M$}
		\STATE Initialize $\mathbf{t}^{(1)} \leftarrow \mathbf{t}_{m}$ and $\boldsymbol{\mu}^{(0)} \leftarrow \mathbf{0}$.
		\FOR   {$j=1:J$}
		\STATE Calculate Riemannian gradient according to \eqref{eq_grad}.
		\STATE Set $\kappa^{(j)}=0.5/\|\mathrm{grad}_{\mathbf{t}^{(j)}} \tilde{r}_{m}\|$.
		\STATE Update search direction according to \eqref{eq_search_dir}.
		\STATE Initialize step size $\tau^{(j)} \leftarrow \tau_{\max}$.
		\WHILE {\eqref{eq_problem_sub_c} or \eqref{eq_Armijo} is not satisfied}
		\STATE Shrink the step size $\tau^{(j)} \leftarrow \zeta \tau^{(j)}$.
		\ENDWHILE
		\STATE Update $\mathbf{t}^{(j+1)} \leftarrow \mathrm{Retr}_{\mathbf{t}^{(j)}}(\tau^{(j)}\boldsymbol{\mu}^{(j)})$.
		\IF    {$\tau^{(j)}<\tau_{\min}$}
		\STATE Break.
		\ENDIF
		\STATE Update $\mathbf{t}_{m} \leftarrow \mathbf{t}^{(j+1)}$.
		\ENDFOR
		\ENDFOR
		\IF    {Increment of the objective function is below $\epsilon$}
		\STATE Break.
		\ENDIF
		\ENDFOR
		\RETURN $\tilde{\mathbf{t}}$.
	\end{algorithmic}
\end{algorithm}

The overall algorithm for solving problem \eqref{eq_problem} is summarized in Algorithm \ref{Alg_Riemannian}. First, we initialize the APV $\tilde{\mathbf{t}}$ using the hybrid placement scheme, which will be specified in Section V-A. Next, we generate $Q$ independent realizations of the user and eavesdropper locations, along with their corresponding channel vectors, which are then used to compute the ergodic achievable rate. In lines 4-25, we iteratively optimize the position of each towed ULA using the Riemannian manifold optimization method. In particular, $T$ denotes the maximum number of outer iterations for AO, while $J$ represents the maximum number of inner iterations for Riemannian manifold optimization. The Polak-Ribi\`{e}re parameter is set to $\kappa^{(j)}=0.5/\|\mathrm{grad}_{\mathbf{t}^{(j)}} \tilde{r}_{m}\|$ to balance between the Riemanian gradients over the current and previous iterations \cite{boumal2023introduction}. The inner iteration for Riemannian manifold optimization terminates if the step size $\tau^{(j)}$ is smaller than a predefined threshold $\tau_{\min}$. The outer iteration for AO terminates if the increment of the achievable rate between two consecutive iterations is below a predefined threshold $\epsilon$.

The computational complexity of Algorithm \eqref{eq_problem} is mainly owing to calculating the Riemannian gradient in line 8 and the backtracking line search in lines 12-13. Specifically, the computation of the ergodic achievable rate in \eqref{eq_obj} entails a complexity of $\mathcal{O}(QMN(K+I)^{2})$. Calculating the Euclidean gradient and the Riemannian gradient involves $(3M+1)$ times of computing the ergodic achievable rate. Thus, the corresponding computational complexity is given by $\mathcal{O}(QM^{2}N(K+I)^{2})$. Besides, denoting the maximum number of backtracking line search in lines 12-13 as $B=\log_{\zeta}\frac{\tau_{\min}}{\tau_{\max}}$, the corresponding computational complexity is $\mathcal{O}(BQMN(K+I)^{2})$. Given the maximum number of inner iterations, $J$, the maximum number of outer iterations, $T$, and the AO number, $M$, the total computational complexity of Algorithm \eqref{eq_problem} for solving \eqref{eq_problem} is thus given by $\mathcal{O}(TJ(M+B)QM^{2}N(K+I)^{2})$.

\section{Performance Evaluation}
In this section, we show the simulation results to evaluate performance of the proposed ToMA array enhanced secure communication system. 

\subsection{Simulation Setup and Benchmark Schemes}
In the simulation, the number of towed ULAs and the number of antennas per ULA are set to $M = 8$ and $N = 8$, respectively. Each towed ULA has a cable length of $L = 4$ m. To avoid collisions, the minimum inter-drone distance is set to $D = 0.5$ m. The numbers of users and eavesdroppers are both set to $K = 10$ and $I = 10$, respectively. The carrier frequency is chosen as $f_{\mathrm{c}} = 10$ GHz. The maximum transmit power of the ToMA array is $50$ dBm, and the average noise power at each user is $\sigma^2 = -90$ dBm. An LoS channel is assumed between the ToMA array and each user/eavesdropper, with path gains given by $\alpha_{k} = \frac{\lambda}{4\pi \|\mathbf{r}_{\mathrm{u},k}\|}$, $1 \leq k \leq K$, and $\beta_{i} = \frac{\lambda}{4\pi \|\mathbf{r}_{\mathrm{e},i}\|}$, $1 \leq i \leq I$. The other parameters used for Riemannian manifold optimization in Algorithm \ref{Alg_Riemannian} are set to $T=20$, $\epsilon=10^{-3}$, $J=100$, $\tau_{\max}=10^{-2}$, $\tau_{\min}=10^{-10}$, and $\zeta=0.5$. 

Unless otherwise specified, users and eavesdroppers are randomly distributed within three conical regions pointing forward ($x$-axis), leftward ($y$-axis), and downward ($-z$-axis) relative to the central aircraft. Each cone has a vertex angle of $10^{\circ}$, and the distance to the central aircraft ranges from $100$ m to $1000$ m. In addition to the proposed solution, the following benchmark schemes are defined for performance comparison.
\begin{itemize}
	\item \emph{Upper bound}: The performance upper bound is obtained by adopting MRT beamforming for each user and completely neglecting the multiuser interference and signal leakage, where the optimal power allocation is used to maximize the minimum achievable rate among users.
	\item \emph{Horizontal placement}: The towed ULAs are all placed on the horizontal plane (i.e., $x$-$O$-$y$ plane), with an equal intersection angle between any two adjacent ULAs. 
	\item \emph{Vertical placement}: The towed ULAs are all placed on the vertical plane (i.e., $x$-$O$-$z$ plane), with an equal intersection angle between any two adjacent ULAs. 
	\item \emph{Hybrid placement}: Half of the towed ULAs are placed on the horizontal plane and the remaining towed ULAs are placed on the vertical plane, with an equal intersection angle between any two adjacent ULAs on each plane.
	\item \emph{FPA-Dense UPA}: The antenna elements form a uniform planar array (UPA) of size $M \times N$, with the inter-antenna spacing given by $\lambda/2$.
	\item \emph{FPA-Sparse UPA}: The antenna elements form an UPA of size $M \times N$, with the inter-antenna spacing given by $2\lambda$.
\end{itemize}

\subsection{Simulation Results}
First, we evaluate the convergence behavior of the proposed Algorithm~\ref{Alg_Riemannian} in Fig.~\ref{Fig_Convergence}, where the total number of antennas is fixed to $MN=64$ and the total length of all ULA cables is fixed to $ML=32$ m. For different numbers of towed ULAs, $M$, the algorithm consistently exhibits fast convergence, typically within five (outer) iterations, and closely approaches the upper bound on the achievable rate. This demonstrates the effectiveness of both the proposed ToMA array architecture and the corresponding antenna position optimization method. Furthermore, the optimized antenna placements are illustrated in Fig.~\ref{Fig_APV}. It can be observed that the projections of the array geometry onto the $x$-$O$-$y$, $x$-$O$-$z$, and $y$-$O$-$z$ planes exhibit large and balanced apertures. Such a layout effectively reduces the array response correlation between users and eavesdroppers distributed in three conical areas, which aligns with the theoretical insights presented in Theorems~\ref{Theo_far} and~\ref{Theo_dir}.

\begin{figure}[t]
	\begin{center}
		\includegraphics[width=\figwidth cm]{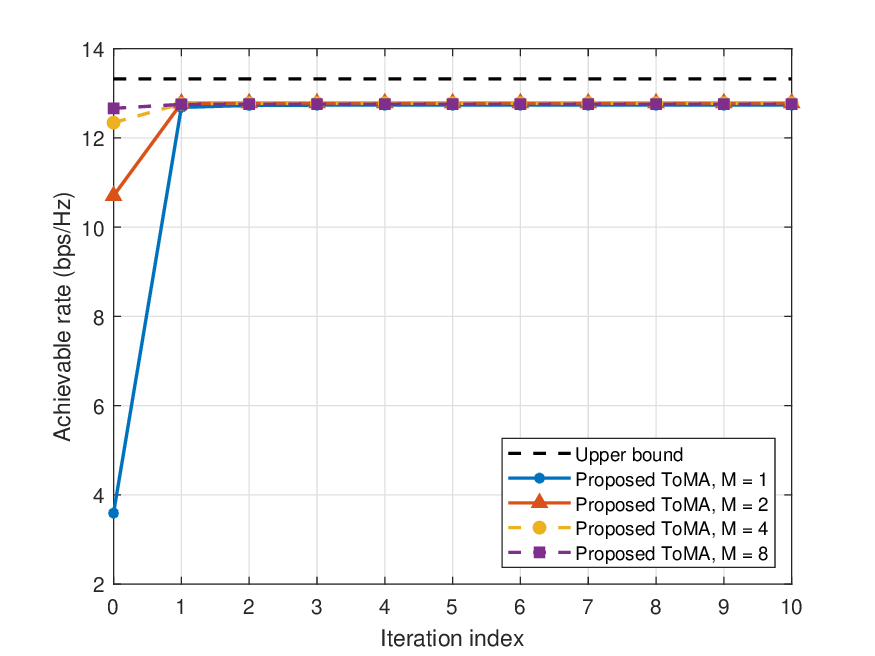}
		\caption{Convergence evaluation of the proposed Algorithm \ref{Alg_Riemannian}.}
		\label{Fig_Convergence}
	\end{center}
\end{figure}

\begin{figure*}[t]
	\centering
	\subfigure[$x$-$O$-$y$ plane]{\includegraphics[width=5.5 cm]{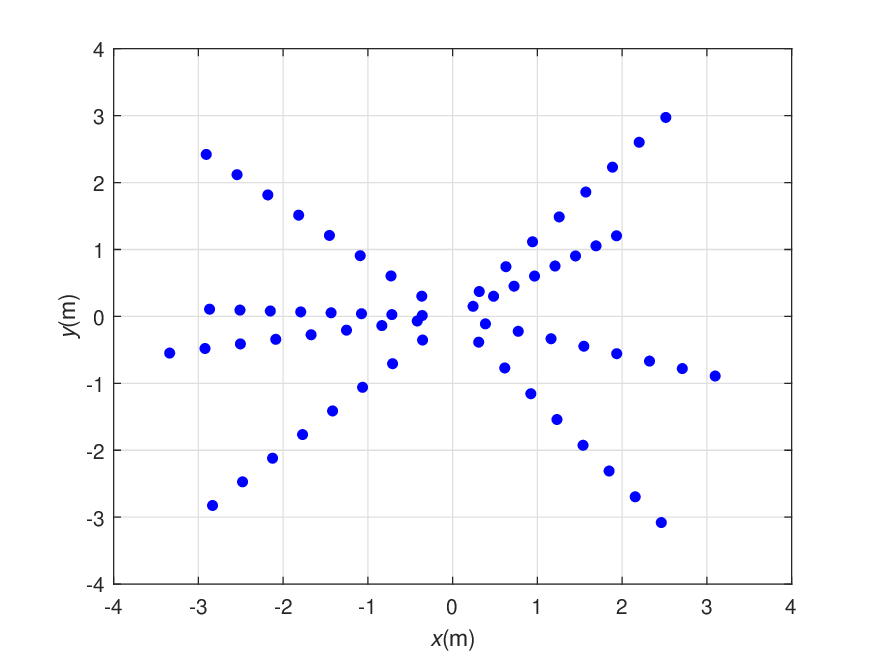} \label{Fig_APV_xoy}}
	\subfigure[$x$-$O$-$z$ plane]{\includegraphics[width=5.5 cm]{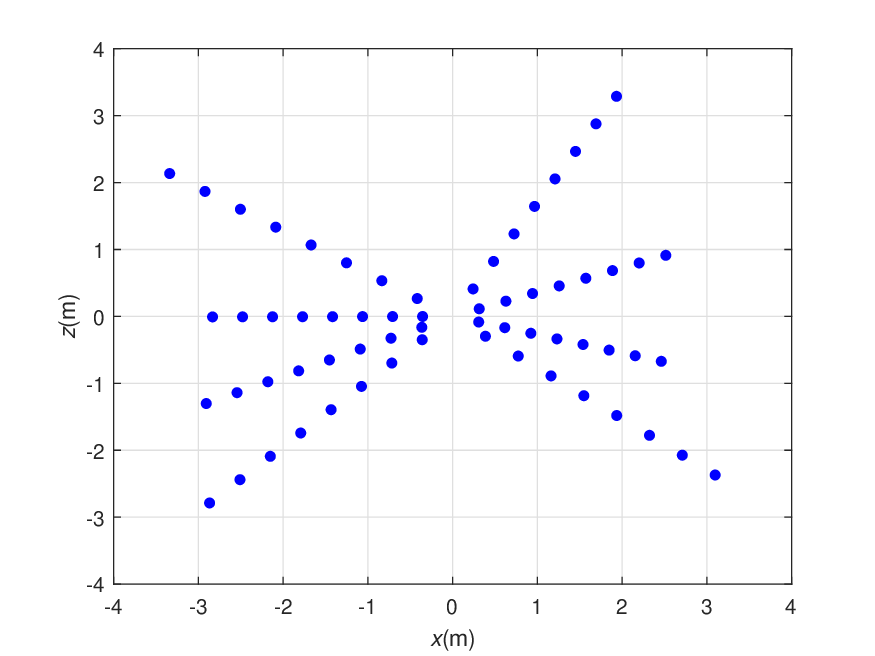} \label{Fig_APV_xoz}}
	\subfigure[$y$-$O$-$z$ plane]{\includegraphics[width=5.5 cm]{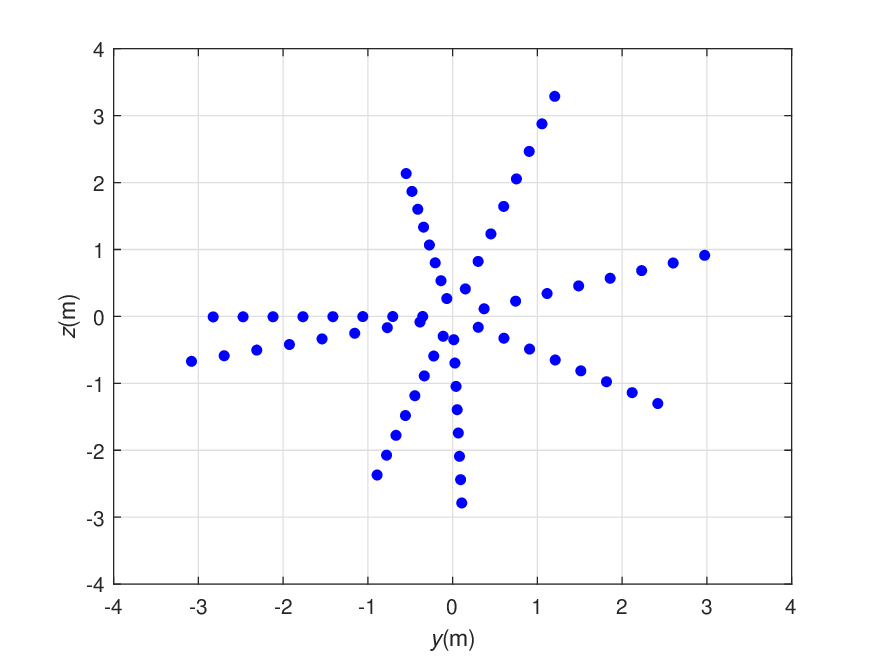} \label{Fig_APV_yoz}}
	\caption{The projection of optimized antenna positions on different planes.}
	\label{Fig_APV}
	\vspace{-15 pt}
\end{figure*}

\begin{figure}[t]
	\begin{center}
		\includegraphics[width=\figwidth cm]{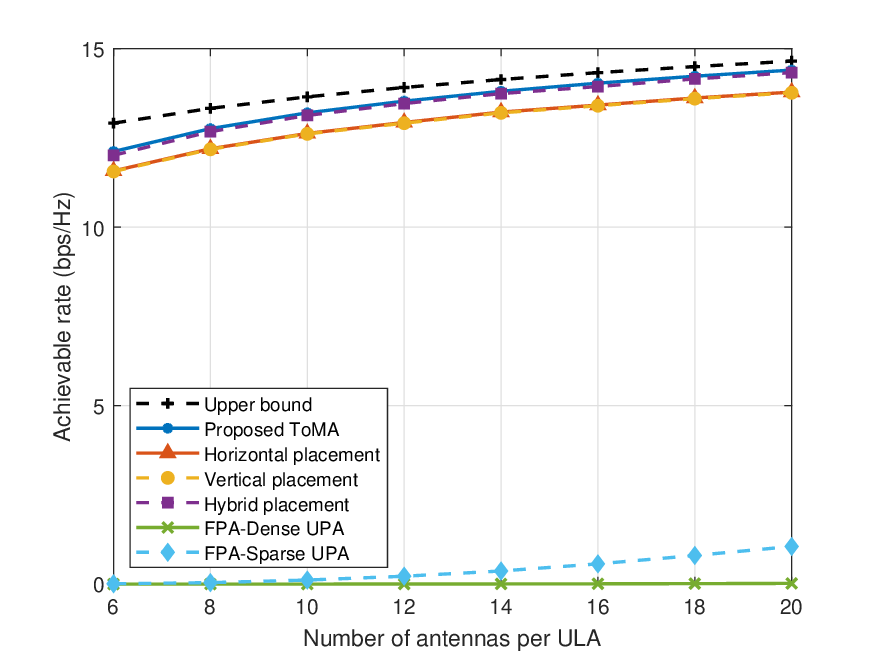}
		\caption{Performance comparison of the proposed and benchmark schemes versus the number of antennas per ULA, $N$, where the number of towed ULAs is fixed to $M=8$.}
		\label{Fig_Antenna_number}
	\end{center}
	\vspace{-12 pt}
\end{figure}
Next, we compare in Fig.~\ref{Fig_Antenna_number} the rate performance of the proposed and benchmark schemes as a function of the number of antennas per ULA, $N$. As shown, the proposed ToMA solution consistently achieves the highest rate performance among all schemes and closely approaches the upper bound, particularly for larger values of $N$. In contrast, conventional FPA schemes with either dense or sparse array layouts yield significantly lower achievable rates. This is because when the eavesdroppers are spatially close to the users, FPA arrays with limited apertures fail to effectively distinguish between them. By significantly expanding the array aperture, the proposed ToMA architecture enables ultra secure communication even under such challenging spatial proximity. Furthermore, the hybrid placement scheme performs comparably to the optimized antenna positioning approach. This is attributed to its ability to maintain relatively large and balanced apertures on the $x$-$O$-$y$, $x$-$O$-$z$, and $y$-$O$-$z$ planes, which helps reduce array response correlation between users and eavesdroppers in all three conical regions.

\begin{figure}[t]
	\begin{center}
		\includegraphics[width=\figwidth cm]{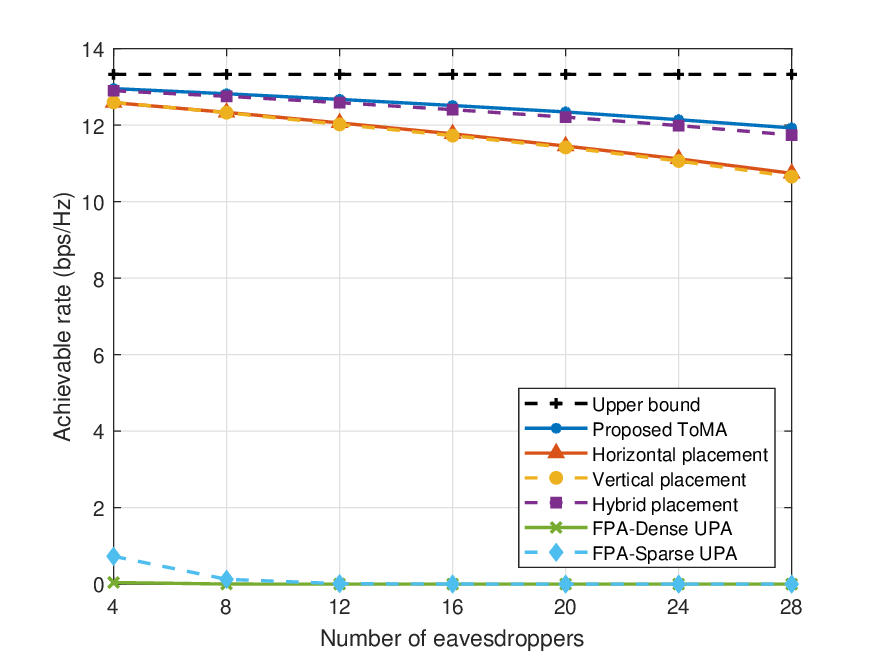}
		\caption{Performance comparison of the proposed and benchmark schemes versus the number of eavesdroppers, $I$.}
		\label{Fig_Eave_number}
	\end{center}
\end{figure}
Similar results can also be observed from Fig.~\ref{Fig_Eave_number}, where the proposed ToMA array achieves a much higher rate performance compared to the conventional FPA schemes. As the number of eavesdroppers increases, all schemes experience a certain degree of performance degradation. This is because the transmitter must suppress signal leakage to a larger number of potential eavesdroppers, which inevitably reduces the signal power received by legitimate users. Notably, the horizontal and vertical placement schemes suffer more severe performance losses, as they fail to maintain balanced and sufficiently large array apertures across different planes in the 3D space. This imbalance leads to higher channel correlations between the users and eavesdroppers, thereby compromising the secure communication performance.

\begin{figure}[t]
	\begin{center}
		\includegraphics[width=\figwidth cm]{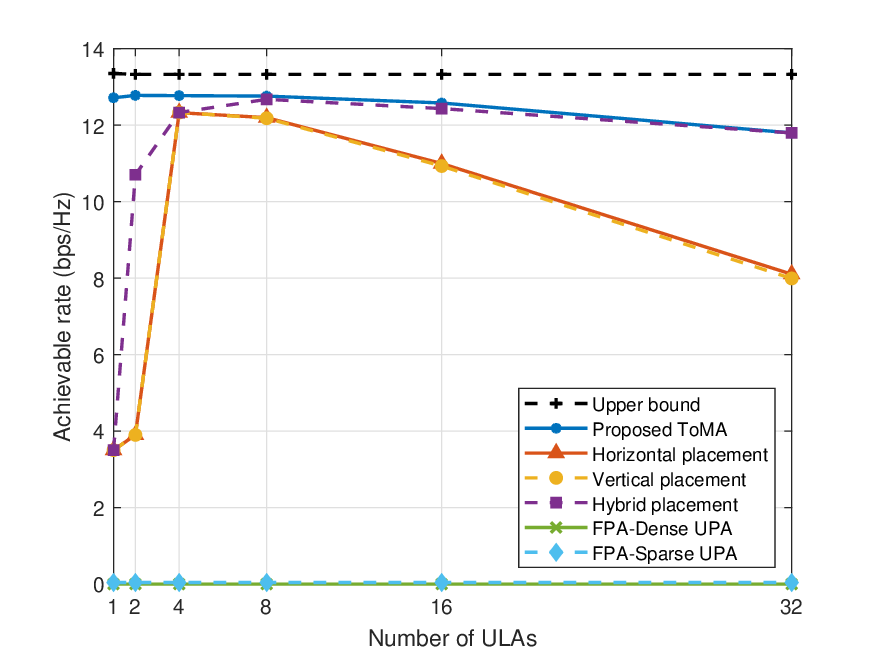}
		\caption{Performance comparison of the proposed and benchmark schemes versus the number of towed ULAs, $M$.}
		\label{Fig_ULA_number}
	\end{center}
\end{figure}
In Fig.~\ref{Fig_ULA_number}, we compare the performance of the proposed and benchmark schemes versus the number of towed ULAs, $M$, where the total number of antennas is fixed to $MN=64$ and the total cable length is fixed to $ML=32$~m. As a result, the performance upper bound remains constant across different values of $M$. For the proposed ToMA scheme, the achievable rate slightly improves as $M$ increases from $1$ to $2$, which is consistent with Theorem~\ref{Coro_far}, indicating that two optimally oriented towed ULAs outperform a single one. However, as $M$ continues to increase, the rate performance of the ToMA array begins to decline. This degradation arises because, under the fixed total cable length constraint $ML=32$~m, increasing $M$ reduces the length of each individual ULA, thereby shrinking the overall effective aperture of the array. Consequently, the channel correlation between users and eavesdroppers increases. Furthermore, when the number of towed ULAs is small (e.g., $M=1,2,4$), the horizontal, vertical, and hybrid placement schemes exhibit significant performance gaps compared to the proposed solution, highlighting the importance of optimizing antenna placement.

\begin{figure}[t]
	\centering
	\subfigure[Downward cone with $10^{\circ}$ vertex angle]{\includegraphics[width=\figwidth cm]{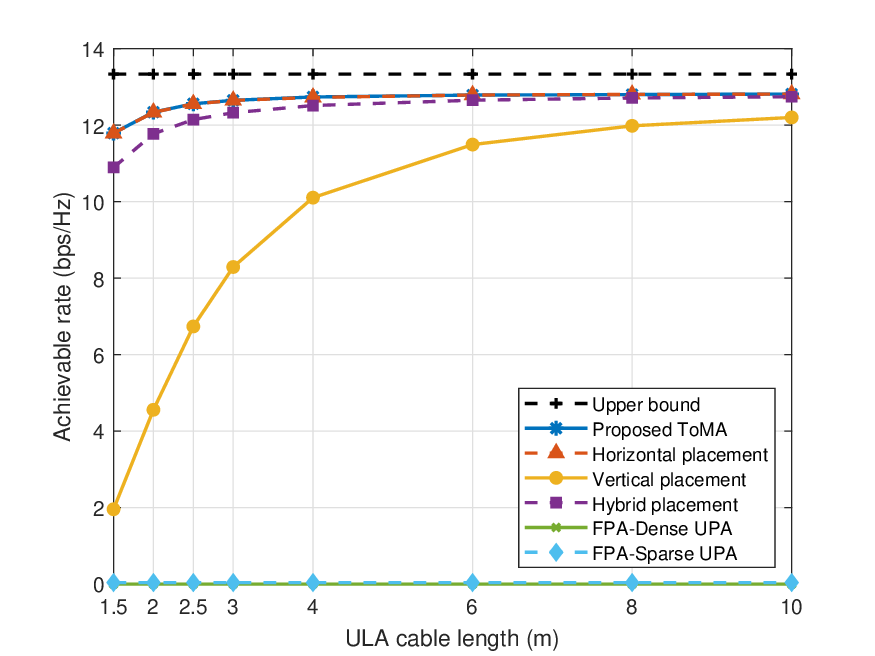} \label{Fig_Cable_length_downward10}}
	\subfigure[Leftward cone with $20^{\circ}$ vertex angle]{\includegraphics[width=\figwidth cm]{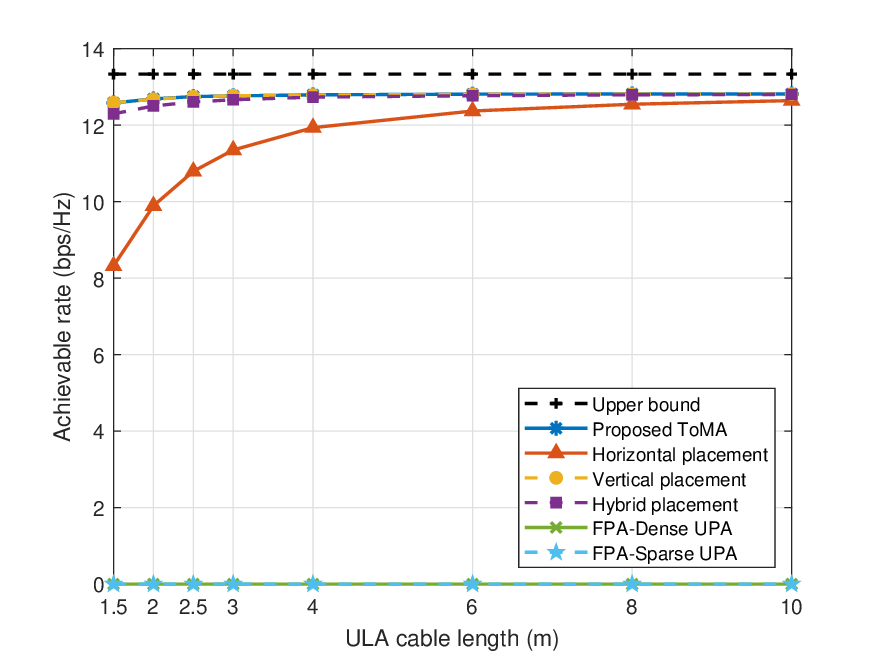} \label{Fig_Cable_length_leftward20}}
	\caption{Performance comparison of the proposed and benchmark schemes versus the ULA cable length, $L$.}
	\label{Fig_Cable_length}
\end{figure}
Fig.~\ref{Fig_Cable_length} illustrates the achievable rate performance of various schemes versus the length of each towed ULA cable, $L$, under different user/eavesdropper spatial distributions. In Fig.~\ref{Fig_Cable_length_downward10}, users and eavesdroppers are confined to a single downward conical area with a vertex angle of $10^{\circ}$. In this scenario, the horizontal placement scheme performs nearly identically to the proposed solution, since its array aperture is aligned with the dominant (downward) direction of the users and eavesdroppers. In contrast, the vertical placement scheme suffers significant performance degradation due to its limited effective aperture in that direction. In Fig.~\ref{Fig_Cable_length_leftward20}, the users and eavesdroppers are distributed in a single leftward conical region with a larger vertex angle of $20^{\circ}$. The performance trends are reversed, where the vertical placement scheme becomes nearly optimal, while the horizontal scheme performs poorly due to aperture misalignment. In both cases, the proposed ToMA array exhibits improved rate performance as the length of each towed ULA cable increases, benefiting from the enlarged array aperture. These results underscore the critical role of towing drones in enabling flexible aperture design for ultra-secure wireless communications.

\begin{figure}[t]
	\begin{center}
		\includegraphics[width=\figwidth cm]{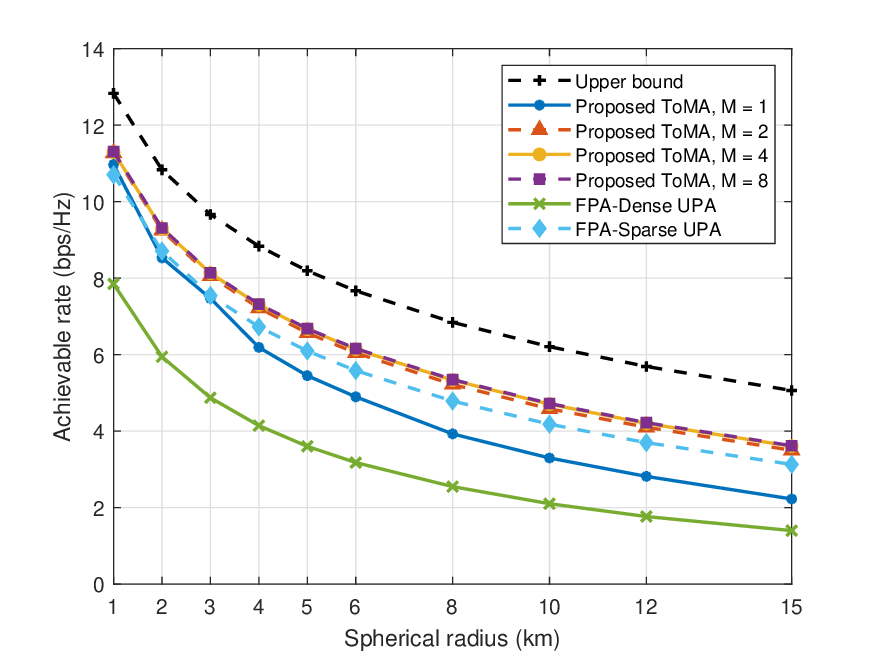}
		\caption{Performance comparison when the users and eavesdroppers are distributed on a spherical surface with varying radii.}
		\label{Fig_Spherical_radius}
	\end{center}
\end{figure}
Furthermore, in Fig.~\ref{Fig_Spherical_radius}, we compare the performance of different schemes when users and eavesdroppers are uniformly distributed on a spherical surface. As the radius of the spherical region increases, all schemes experience a decline in achievable rate due to increased path loss. We also evaluate the proposed ToMA scheme with varying numbers of towed ULAs, where the total number of antennas is fixed to $MN=64$ and the total cable length is fixed to $ML=32$~m for a fair comparison. When $M=1$, the ToMA array suffers from noticeable performance degradation, especially at larger spherical radii. This is attributed to the inability of a single ULA to maintain a large aperture in multiple directions within the 3D space. Specifically, the effective aperture of a single ULA is nearly zero along its radial direction. Moreover, due to its 1D structure, the array response correlation between any two directions with the same intersection angle to the ULA is always $MN$, severely degrading system performance. As the number of towed ULAs increases, the ToMA array has additional DoFs to balance effective apertures across different directions, thereby enhancing performance. Combining the insights from Figs.~\ref{Fig_Antenna_number} and \ref{Fig_Spherical_radius}, we observe that deploying two towed ULAs already yields satisfactory performance, while it may require frequent antenna reconfiguration to accommodate dynamic user and eavesdropper distributions. In contrast, the schemes employing more ULAs offer greater robustness under diverse spatial distributions of users and eavesdroppers, and even heuristic strategies such as hybrid placement can achieve high secure communication performance.

\begin{figure}[t]
	\begin{center}
		\includegraphics[width=\figwidth cm]{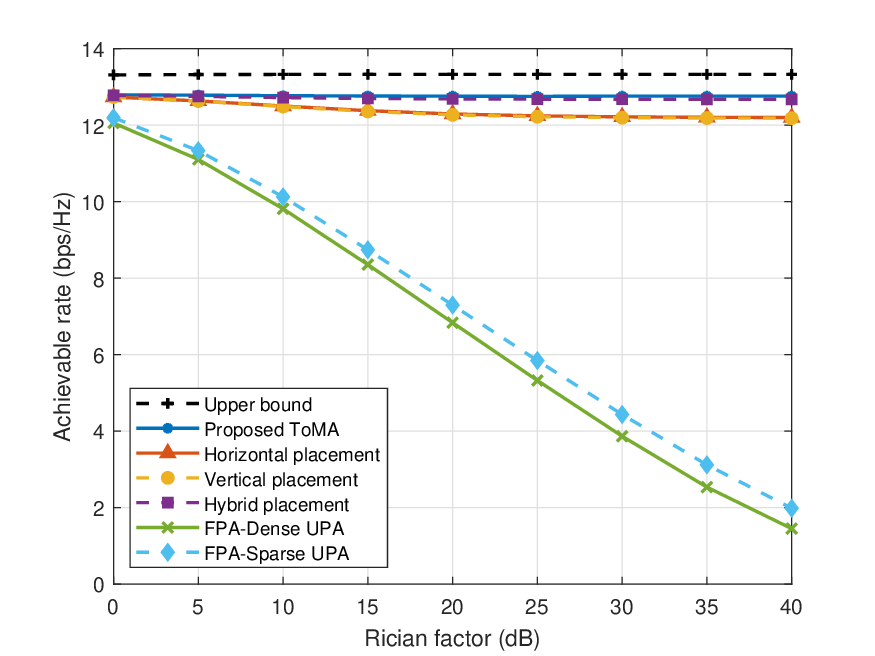}
		\caption{Performance comparison under Rician fading channels with varying Rician factors.}
		\label{Fig_Rician_factor}
	\end{center}
\end{figure}
Finally, we evaluate the performance of the proposed and benchmark schemes under Rician fading channels, as illustrated in Fig.~\ref{Fig_Rician_factor}. In addition to the LoS channel paths described in \eqref{eq_channel}, NLoS components are also present between the ToMA array and each user/eavesdropper. These NLoS components are modeled as independent and identically distributed (i.i.d.) random variables across different antenna positions, capturing the effects of random scattering in the environment. The canonical Rician factor is defined as the ratio between the average powers of the LoS and NLoS channel components. As shown in Fig.~\ref{Fig_Rician_factor}, the proposed ToMA scheme consistently achieves near-optimal performance across various Rician factor values, closely approaching the upper bound. In contrast, the benchmark FPA schemes exhibit significant performance degradation as the Rician factor increases. The reason is as follows. When the Rician factor is small, the channels resemble i.i.d. Rayleigh fading, where antenna placement has minimal impact on the channel statistics, resulting in similar performance between the ToMA and FPA schemes. However, as the Rician factor increases with the LoS components becoming more dominant, the ToMA scheme benefits from its larger effective array aperture, which significantly reduces channel correlation between users and eavesdroppers. Consequently, the performance gain of the proposed ToMA scheme becomes more pronounced in LoS-dominant environments, as commonly encountered in airborne communication systems.

\vspace{-0.2 cm}
\section{Conclusions}
This paper proposed a novel ToMA array architecture to enhance PLS in airborne communication systems. By employing drone-assisted deployment of multiple towed antenna subarrays, the ToMA array enables ultra-large effective apertures and dynamic 3D positioning beyond the physical constraints of conventional airborne platforms. These capabilities allow for highly flexible spatial beamforming and adaptive suppression of information leakage to eavesdroppers. We formulated the secure transmission problem under ZF beamforming and optimized the APV to maximize the users' ergodic achievable rate. Analytical results in the single-user and single-eavesdropper case revealed the impact of array geometry on channel correlation and secrecy performance. For the general multiuser scenario, a practical AO algorithm was developed on Riemannian manifolds to iteratively optimize the subarray positions. Simulation results demonstrated that the proposed ToMA array significantly outperforms FPA (i.e., dense/sparse UPA) systems, especially in challenging environments where eavesdroppers are spatially close to users under LoS-dominant channels. The results also highlight the importance of cable length and 3D geometry reconfigurability in achieving secure and robust airborne communications. Future work may explore the impact of directional antenna radiation patterns on position optimization and secure transmission performance. Besides, the effects of aerial platform jitter and air turbulence may result in uncertainty in antenna positions, which should be further modeled. In this context, robust signal processing techniques are required to cope with such challenging scenarios, representing an important direction for further research.

\vspace{-0.2 cm}
\appendices
\section{Proof of Theorem \ref{Theo_far}} \label{Appx_far}
For $M=1$, the array response correlation in \eqref{eq_corr_far} is simplified as
\begin{equation}\label{eq_corr_far_M1}
	\begin{aligned}
		f_{\mathrm{ff}}(\tilde{\mathbf{t}}) 
		=&\left|\sum \limits_{n=1}^{N} \e^{\jj \frac{2\pi}{\lambda} \frac{n}{N} (\hat{\mathbf{r}}_{\mathrm{u}} - \hat{\mathbf{r}}_{\mathrm{e}})^{\mathrm{T}}\mathbf{t}_{1} } \right|
		=\left|\frac{1 - \e^{\jj \frac{2\pi}{\lambda} (\hat{\mathbf{r}}_{\mathrm{u}} - \hat{\mathbf{r}}_{\mathrm{e}})^{\mathrm{T}}\mathbf{t}_{1} }}
		{1 - \e^{\jj \frac{2\pi}{\lambda} \frac{1}{N} (\hat{\mathbf{r}}_{\mathrm{u}} - \hat{\mathbf{r}}_{\mathrm{e}})^{\mathrm{T}}\mathbf{t}_{1} }}\right|\\
		=&\left|\frac{\sin(\frac{\pi}{\lambda}(\hat{\mathbf{r}}_{\mathrm{u}} - \hat{\mathbf{r}}_{\mathrm{e}})^{\mathrm{T}}\mathbf{t}_{1})}{\sin(\frac{\pi}{\lambda}\frac{1}{N}(\hat{\mathbf{r}}_{\mathrm{u}} - \hat{\mathbf{r}}_{\mathrm{e}})^{\mathrm{T}}\mathbf{t}_{1})}\right|.
	\end{aligned}
\end{equation}

If $\|\hat{\mathbf{r}}_{\mathrm{u}} - \hat{\mathbf{r}}_{\mathrm{e}}\| = 0$, the Dirichlet kernel function in \eqref{eq_corr_far_M1} is always equal to $N$. We next consider the case where $\|\hat{\mathbf{r}}_{\mathrm{u}} - \hat{\mathbf{r}}_{\mathrm{e}}\| \neq 0$. Under the constraint $\|\mathbf{t}_{1}\| = L$, the minimum and maximum values of the inner product $(\hat{\mathbf{r}}_{\mathrm{u}} - \hat{\mathbf{r}}_{\mathrm{e}})^{\mathrm{T}} \mathbf{t}_{1}$ are $-\|\hat{\mathbf{r}}_{\mathrm{u}} - \hat{\mathbf{r}}_{\mathrm{e}}\| L$ and $\|\hat{\mathbf{r}}_{\mathrm{u}} - \hat{\mathbf{r}}_{\mathrm{e}}\| L$, respectively. These extrema are attained when
$$
\mathbf{t}_{1}^{\min} = -L \frac{\hat{\mathbf{r}}_{\mathrm{u}} - \hat{\mathbf{r}}_{\mathrm{e}}}{\|\hat{\mathbf{r}}_{\mathrm{u}} - \hat{\mathbf{r}}_{\mathrm{e}}\|}, \quad
\mathbf{t}_{1}^{\max} = L \frac{\hat{\mathbf{r}}_{\mathrm{u}} - \hat{\mathbf{r}}_{\mathrm{e}}}{\|\hat{\mathbf{r}}_{\mathrm{u}} - \hat{\mathbf{r}}_{\mathrm{e}}\|}.
$$
As $\mathbf{t}_{1}$ continuously rotates from $\mathbf{t}_{1}^{\min}$ to $\mathbf{t}_{1}^{\max}$, the value of $(\hat{\mathbf{r}}_{\mathrm{u}} - \hat{\mathbf{r}}_{\mathrm{e}})^{\mathrm{T}} \mathbf{t}_{1}$ increases continuously from $-\|\hat{\mathbf{r}}_{\mathrm{u}} - \hat{\mathbf{r}}_{\mathrm{e}}\| L$ to $\|\hat{\mathbf{r}}_{\mathrm{u}} - \hat{\mathbf{r}}_{\mathrm{e}}\| L$.
Note that the Dirichlet kernel function in \eqref{eq_corr_far_M1} equals zero if and only if
$(\hat{\mathbf{r}}_{\mathrm{u}} - \hat{\mathbf{r}}_{\mathrm{e}})^{\mathrm{T}} \mathbf{t}_{1} = k\lambda,~\forall k \in \mathbb{Z} \setminus \{0\}$.
If $\|\hat{\mathbf{r}}_{\mathrm{u}} - \hat{\mathbf{r}}_{\mathrm{e}}\| L \geq \lambda$, then a $\mathbf{t}_{1}$ can always be found such that $(\hat{\mathbf{r}}_{\mathrm{u}} - \hat{\mathbf{r}}_{\mathrm{e}})^{\mathrm{T}} \mathbf{t}_{1} = \lambda$, yielding zero value of the Dirichlet kernel function. 

In contrast, if $\|\hat{\mathbf{r}}_{\mathrm{u}} - \hat{\mathbf{r}}_{\mathrm{e}}\| L < \lambda$, the Dirichlet kernel function increases as $(\hat{\mathbf{r}}_{\mathrm{u}} - \hat{\mathbf{r}}_{\mathrm{e}})^{\mathrm{T}} \mathbf{t}_{1}$ increases from $-\|\hat{\mathbf{r}}_{\mathrm{u}} - \hat{\mathbf{r}}_{\mathrm{e}}\| L$ to $0$, and then decreases as $(\hat{\mathbf{r}}_{\mathrm{u}} - \hat{\mathbf{r}}_{\mathrm{e}})^{\mathrm{T}} \mathbf{t}_{1}$ increases from $0$ to $\|\hat{\mathbf{r}}_{\mathrm{u}} - \hat{\mathbf{r}}_{\mathrm{e}}\| L$.
Therefore, the minimum value of the Dirichlet kernel function in \eqref{eq_corr_far_M1} is achieved at the two endpoints, i.e., when $(\hat{\mathbf{r}}_{\mathrm{u}} - \hat{\mathbf{r}}_{\mathrm{e}})^{\mathrm{T}} \mathbf{t}_{1} = \pm \|\hat{\mathbf{r}}_{\mathrm{u}} - \hat{\mathbf{r}}_{\mathrm{e}}\| L$, which yields
$$
F_{\mathrm{ff}}(L) = \left| \frac{\sin\left( \frac{\pi}{\lambda} \|\hat{\mathbf{r}}_{\mathrm{u}} - \hat{\mathbf{r}}_{\mathrm{e}}\| L \right)}{\sin\left( \frac{\pi}{\lambda} \frac{1}{N} \|\hat{\mathbf{r}}_{\mathrm{u}} - \hat{\mathbf{r}}_{\mathrm{e}}\| L \right)} \right|.
$$
This thus completes the proof.

\vspace{-0.3 cm}
\section{Proof of Theorem \ref{Coro_far}} \label{Appx_far_coro}
\begin{figure*}[!t]
	\centering 
	\begin{equation}\label{eq_corr_far_M2}
		\begin{aligned}
			f_{\mathrm{ff}}(\tilde{\mathbf{t}}) 
			=&\left|\sum \limits_{n=1}^{N} \e^{\jj \frac{2\pi}{\lambda} \frac{n}{N} (\hat{\mathbf{r}}_{\mathrm{u}} - \hat{\mathbf{r}}_{\mathrm{e}})^{\mathrm{T}}\mathbf{t}_{1} } + \sum \limits_{n=1}^{N} \e^{\jj \frac{2\pi}{\lambda} \frac{n}{N} (\hat{\mathbf{r}}_{\mathrm{u}} - \hat{\mathbf{r}}_{\mathrm{e}})^{\mathrm{T}}\mathbf{t}_{2} } \right|\\
			=&\left|\e^{\jj \frac{2\pi}{\lambda} \frac{1}{N} (\hat{\mathbf{r}}_{\mathrm{u}} - \hat{\mathbf{r}}_{\mathrm{e}})^{\mathrm{T}}\mathbf{t}_{1} } \frac{1 - \e^{\jj \frac{2\pi}{\lambda} (\hat{\mathbf{r}}_{\mathrm{u}} - \hat{\mathbf{r}}_{\mathrm{e}})^{\mathrm{T}}\mathbf{t}_{1} }}
			{1 - \e^{\jj \frac{2\pi}{\lambda} \frac{1}{N} (\hat{\mathbf{r}}_{\mathrm{u}} - \hat{\mathbf{r}}_{\mathrm{e}})^{\mathrm{T}}\mathbf{t}_{1} }}
			+ \e^{\jj \frac{2\pi}{\lambda} \frac{1}{N} (\hat{\mathbf{r}}_{\mathrm{u}} - \hat{\mathbf{r}}_{\mathrm{e}})^{\mathrm{T}}\mathbf{t}_{2} } \frac{1 - \e^{\jj \frac{2\pi}{\lambda} (\hat{\mathbf{r}}_{\mathrm{u}} - \hat{\mathbf{r}}_{\mathrm{e}})^{\mathrm{T}}\mathbf{t}_{2} }}
			{1 - \e^{\jj \frac{2\pi}{\lambda} \frac{1}{N} (\hat{\mathbf{r}}_{\mathrm{u}} - \hat{\mathbf{r}}_{\mathrm{e}})^{\mathrm{T}}\mathbf{t}_{2} }} \right|\\
			=&\left|\e^{\jj \frac{\pi}{\lambda} \frac{N+1}{N} (\hat{\mathbf{r}}_{\mathrm{u}} - \hat{\mathbf{r}}_{\mathrm{e}})^{\mathrm{T}}\mathbf{t}_{1}}
			\frac{\sin(\frac{\pi}{\lambda}(\hat{\mathbf{r}}_{\mathrm{u}} - \hat{\mathbf{r}}_{\mathrm{e}})^{\mathrm{T}}\mathbf{t}_{1})}{\sin(\frac{\pi}{\lambda}\frac{1}{N}(\hat{\mathbf{r}}_{\mathrm{u}} - \hat{\mathbf{r}}_{\mathrm{e}})^{\mathrm{T}}\mathbf{t}_{1})} 
			+ \e^{\jj \frac{\pi}{\lambda} \frac{N+1}{N}(\hat{\mathbf{r}}_{\mathrm{u}} - \hat{\mathbf{r}}_{\mathrm{e}})^{\mathrm{T}}\mathbf{t}_{2} }\frac{\sin(\frac{\pi}{\lambda}(\hat{\mathbf{r}}_{\mathrm{u}} - \hat{\mathbf{r}}_{\mathrm{e}})^{\mathrm{T}}\mathbf{t}_{2})}{\sin(\frac{\pi}{\lambda}\frac{1}{N}(\hat{\mathbf{r}}_{\mathrm{u}} - \hat{\mathbf{r}}_{\mathrm{e}})^{\mathrm{T}}\mathbf{t}_{2})} \right|.
		\end{aligned}
	\end{equation}
	\vspace*{8pt}
	\hrulefill
	\vspace{-12 pt}
\end{figure*}

For $M=2$, the array response correlation in \eqref{eq_corr_far} is given by \eqref{eq_corr_far_M2} shown at the top of the next page.
If $\|\hat{\mathbf{r}}_{\mathrm{u}} - \hat{\mathbf{r}}_{\mathrm{e}}\| = 0$, the array response correlation is always equal to $2N$. We next consider the case where $\|\hat{\mathbf{r}}_{\mathrm{u}} - \hat{\mathbf{r}}_{\mathrm{e}}\| \neq 0$. 
In particular, we consider to set $\mathbf{t}_{1}=-\mathbf{t}_{2}$, where the array response correlation in \eqref{eq_corr_far_M2} is simplified as
\begin{equation}\label{eq_corr_far_M2opt}{\small
	\begin{aligned}
		f_{\mathrm{ff}}(\tilde{\mathbf{t}}) 
		=\left|\cos{\Big(\frac{2\pi}{\lambda} \frac{N+1}{N} (\hat{\mathbf{r}}_{\mathrm{u}} - \hat{\mathbf{r}}_{\mathrm{e}})^{\mathrm{T}}\mathbf{t}_{1}\Big)}
		\frac{\sin(\frac{\pi}{\lambda}(\hat{\mathbf{r}}_{\mathrm{u}} - \hat{\mathbf{r}}_{\mathrm{e}})^{\mathrm{T}}\mathbf{t}_{1})}{\sin(\frac{\pi}{\lambda}\frac{1}{N}(\hat{\mathbf{r}}_{\mathrm{u}} - \hat{\mathbf{r}}_{\mathrm{e}})^{\mathrm{T}}\mathbf{t}_{1})} \right|.
	\end{aligned}}
\end{equation}
As demonstrated in Appendix~\ref{Appx_far}, when \( \mathbf{t}_{1} \) continuously rotates from \( \mathbf{t}_{1}^{\min} \) to \( \mathbf{t}_{1}^{\max} \), the value of \( (\hat{\mathbf{r}}_{\mathrm{u}} - \hat{\mathbf{r}}_{\mathrm{e}})^{\mathrm{T}} \mathbf{t}_{1} \) increases monotonically from \( -\|\hat{\mathbf{r}}_{\mathrm{u}} - \hat{\mathbf{r}}_{\mathrm{e}}\| L \) to \( \|\hat{\mathbf{r}}_{\mathrm{u}} - \hat{\mathbf{r}}_{\mathrm{e}}\| L \). 
Note that the array response correlation in \eqref{eq_corr_far_M2} becomes zero if and only if 
\[
(\hat{\mathbf{r}}_{\mathrm{u}} - \hat{\mathbf{r}}_{\mathrm{e}})^{\mathrm{T}} \mathbf{t}_{1} = k\lambda 
\quad \text{or} \quad 
\frac{N+1}{N} (\hat{\mathbf{r}}_{\mathrm{u}} - \hat{\mathbf{r}}_{\mathrm{e}})^{\mathrm{T}} \mathbf{t}_{1} = (k + \frac{1}{2} ) \lambda
\]
holds, $\forall k \in \mathbb{Z} \setminus \{0\}$. Under the condition \( 2\frac{N+1}{N} \|\hat{\mathbf{r}}_{\mathrm{u}} - \hat{\mathbf{r}}_{\mathrm{e}}\| L \geq \lambda \), there always exists a \( \mathbf{t}_{1} \) such that 
\[
\frac{N+1}{N} (\hat{\mathbf{r}}_{\mathrm{u}} - \hat{\mathbf{r}}_{\mathrm{e}})^{\mathrm{T}} \mathbf{t}_{1} = \frac{\lambda}{2},
\]
which leads to zero array response correlation.

If \( 2\frac{N+1}{N}\|\hat{\mathbf{r}}_{\mathrm{u}} - \hat{\mathbf{r}}_{\mathrm{e}}\|L < \lambda \), it is straightforward to verify that the two terms in \eqref{eq_corr_far_M2} are complex numbers with phases lying within the interval \( \left(-\frac{\pi}{2}, \frac{\pi}{2}\right) \). To minimize \eqref{eq_corr_far_M2}, the optimal APV should ensure that the phases 
\[
\frac{\pi}{\lambda} \frac{N+1}{N} (\hat{\mathbf{r}}_{\mathrm{u}} - \hat{\mathbf{r}}_{\mathrm{e}})^{\mathrm{T}}\mathbf{t}_{1}
\quad \text{and} \quad
\frac{\pi}{\lambda} \frac{N+1}{N} (\hat{\mathbf{r}}_{\mathrm{u}} - \hat{\mathbf{r}}_{\mathrm{e}})^{\mathrm{T}}\mathbf{t}_{2}
\]
have opposite signs. Otherwise, one can always set \( \mathbf{t}_{1} \leftarrow -\mathbf{t}_{1} \) or \( \mathbf{t}_{2} \leftarrow -\mathbf{t}_{2} \) to further reduce the array response correlation. Without loss of generality, we assume
\[
\frac{\pi}{\lambda} \frac{N+1}{N} (\hat{\mathbf{r}}_{\mathrm{u}} - \hat{\mathbf{r}}_{\mathrm{e}})^{\mathrm{T}}\mathbf{t}_{1} > 0
\quad \text{and} \quad
\frac{\pi}{\lambda} \frac{N+1}{N} (\hat{\mathbf{r}}_{\mathrm{u}} - \hat{\mathbf{r}}_{\mathrm{e}})^{\mathrm{T}}\mathbf{t}_{2} < 0.
\]

For any given \( \mathbf{t}_{1} \), it is easy to verify that \eqref{eq_corr_far_M2} monotonously decreases as \( (\hat{\mathbf{r}}_{\mathrm{u}} - \hat{\mathbf{r}}_{\mathrm{e}})^{\mathrm{T}}\mathbf{t}_{2} \) increases from 0 to \( -(\hat{\mathbf{r}}_{\mathrm{u}} - \hat{\mathbf{r}}_{\mathrm{e}})^{\mathrm{T}}\mathbf{t}_{1} \). Similarly, for any given \( \mathbf{t}_{2} \), \eqref{eq_corr_far_M2} monotonously increases as \( (\hat{\mathbf{r}}_{\mathrm{u}} - \hat{\mathbf{r}}_{\mathrm{e}})^{\mathrm{T}}\mathbf{t}_{1} \) increases from \( -(\hat{\mathbf{r}}_{\mathrm{u}} - \hat{\mathbf{r}}_{\mathrm{e}})^{\mathrm{T}}\mathbf{t}_{2} \) to 0. These observations imply that the optimal APV must satisfy \( \mathbf{t}_{2} = -\mathbf{t}_{1} \); otherwise, we can always set \( \mathbf{t}_{2} \leftarrow -\mathbf{t}_{1} \) or \( \mathbf{t}_{1} \leftarrow -\mathbf{t}_{2} \) to further reduce the correlation in \eqref{eq_corr_far_M2}.

Under the condition \( \mathbf{t}_{2} = -\mathbf{t}_{1} \), the expression in \eqref{eq_corr_far_M2} simplifies to \eqref{eq_corr_far_M2opt}. Given \( 2\frac{N+1}{N}\|\hat{\mathbf{r}}_{\mathrm{u}} - \hat{\mathbf{r}}_{\mathrm{e}}\|L < \lambda \), we can further verify that \eqref{eq_corr_far_M2opt} increases monotonically as $(\hat{\mathbf{r}}_{\mathrm{u}} - \hat{\mathbf{r}}_{\mathrm{e}})^{\mathrm{T}}\mathbf{t}_{1}$ increases from its minimum value \( -\|\hat{\mathbf{r}}_{\mathrm{u}} - \hat{\mathbf{r}}_{\mathrm{e}}\|L \) to $0$. Therefore, the minimum array response correlation is given by
\[
\tilde{F}_{\mathrm{ff}}(L) = 2\left|\cos\left(\frac{\pi}{\lambda} \frac{N+1}{N} \|\hat{\mathbf{r}}_{\mathrm{u}} - \hat{\mathbf{r}}_{\mathrm{e}}\|L \right)\right| F_{\mathrm{ff}}(L),
\]
which is achieved when
\[
\mathbf{t}_{2} = -\mathbf{t}_{1} = L \frac{\hat{\mathbf{r}}_{\mathrm{u}} - \hat{\mathbf{r}}_{\mathrm{e}}}{\|\hat{\mathbf{r}}_{\mathrm{u}} - \hat{\mathbf{r}}_{\mathrm{e}}\|}.
\]
This thus completes the proof.

\vspace{-0.2 cm}
\section{Proof of Theorem \ref{Theo_dir}} \label{Appx_dir}
Under the condition $M=1$, the array response correlation in \eqref{eq_corr_near} is simplified as
\begin{equation}\label{eq_corr_near_M1}
	\begin{aligned}
		f_{\mathrm{sd}}(\tilde{\mathbf{t}}) 
		=\left|\sum \limits_{n=1}^{N} \e^{\jj \frac{\pi}{\lambda} (\frac{1}{\|\mathbf{r}_{\mathrm{e}}\|}-\frac{1}{\|\mathbf{r}_{\mathrm{u}}\|})\frac{n^2}{N^2}(L^{2} - (\hat{\mathbf{r}}^{\mathrm{T}}\mathbf{t}_{1})^{2})} \right|.
	\end{aligned}
\end{equation}

If $\|\hat{\mathbf{r}}_{\mathrm{u}}\| - \|\hat{\mathbf{r}}_{\mathrm{e}}\| = 0$, it means that the user and eavesdropper have the same location, and thus the array response correlation in \eqref{eq_corr_near} is always equal to $N$. Note that as $\mathbf{t}_{1}$ rotates from $L\hat{\mathbf{r}}$ to a direction perpendicular to $\hat{\mathbf{r}}$, the quantity $(L^{2} - (\hat{\mathbf{r}}^{\mathrm{T}}\mathbf{t}_{1})^{2})$ continuously increases from its minimum value $0$ to its maximum value $L^{2}$. If $(\frac{1}{\|\mathbf{r}_{\mathrm{e}}\|}-\frac{1}{\|\mathbf{r}_{\mathrm{u}}\|})L^{2} < \lambda$, it is easy to verify that \eqref{eq_corr_near_M1} monotonously decreases as $(L^{2} - (\hat{\mathbf{r}}^{\mathrm{T}}\mathbf{t}_{1})^{2})$ increases from $0$ to $L^{2}$. Thus, the minimum array response correlation is given by 
$$F_{\mathrm{sd}}(L) = \left|\sum \limits_{n=1}^{N} \e^{\jj \frac{\pi}{\lambda} (\frac{1}{\|\mathbf{r}_{\mathrm{e}}\|}-\frac{1}{\|\mathbf{r}_{\mathrm{u}}\|})\frac{n^{2}}{N^{2}}L^{2}} \right|,$$
which is achieved when $\hat{\mathbf{r}}^{\mathrm{T}}\mathbf{t}_{1}=0$. This thus completes the proof.

\vspace{-0.3 cm}
\bibliographystyle{IEEEtran} 
\bibliography{IEEEabrv,ref_zhu}

\end{document}